\documentclass[12pt]{article}
\marginparwidth 0pt
\oddsidemargin  0pt
\evensidemargin  0pt
\marginparsep 0pt
\topmargin -2.5truecm
\textheight 23.5 truecm
\textwidth 16.0truecm

\newtheorem{theorem}{Theorem}

\newtheorem{proposition}{Proposition}

\newtheorem{definition}{Definition}
\newtheorem{remark}{Remark}

\usepackage[toc,page]{appendix}
\newenvironment{proof}{\noindent{\bf Proof:}}{\hfill\fbox{}\vspace*{1mm}}
\usepackage[usenames]{color}
\usepackage{amsmath}
\usepackage{amssymb}
\usepackage{graphicx}
\usepackage{subfigure}
\usepackage{float}
\usepackage{booktabs}
\usepackage[table]{xcolor}
\definecolor{lightgray}{gray}{0.9}

\RequirePackage[normalem]{ulem} 
\providecommand{\DIFdeltex}[1]{{\protect\color{red}\sout{#1}}}                      
\usepackage{xcolor}
\usepackage{float}
\usepackage{tikz}
\usepackage{changebar}\setcounter{changebargrey}{30}
\newif\ifdiff
\difftrue
\ifdiff
  \newcommand{\del}[1]{\DIFdeltex{#1}}

\else
  \newcommand{\del}[1]{}

\fi


\begin{document}
\title{\bf Generalized Optimal Liquidation  Problems Across  Multiple Trading  Venues }

\author{Qing-Qing Yang
\thanks{ Advanced Modeling and Applied Computing Laboratory,
Department of Mathematics, The University of Hong Kong, Pokfulam
Road, Hong Kong. E-mail: kerryyang920910@gmail.com.}
\and
Wai-Ki Ching
\thanks{Corresponding author. Advanced Modeling and Applied Computing Laboratory,
Department of Mathematics,
The University of Hong Kong,
Pokfulam Road, Hong Kong.
Hughes Hall, Wollaston Road, Cambridge, U.K.
School of Economics and Management,
Beijing University of Chemical Technology,
North Third Ring Road, Beijing, China.
E-mail: wching@hku.hk. }
\and Jia-Wen Gu
\thanks{Advanced Modeling and Applied Computing Laboratory,
Department of Mathematics, The University of Hong Kong, Pokfulam
Road, Hong Kong. E-mail: jwgu.hku@gmail.com.}
\and Tak-Kuen Siu
\thanks{ Department of Applied Finance and Actuarial Studies, Faculty of Business and Economics, Macquarie University, Sydney, NSW 2109, Australia.
Email: ktksiu2005@gmail.com, Ken.Siu@mq.edu.au}
}

\maketitle

\begin{abstract}
In this paper, we generalize the Almgren-Chriss's market impact model to a more realistic and flexible framework  and employ  it to derive and analyze some aspects of optimal liquidation   problem in a security market.
We illustrate how a trader's  liquidation   strategy alters  when  multiple venues and   
extra information   are brought into   the security market and detected by the trader.
This study gives some new insights into the relationship between liquidation  strategy  and market liquidity, and provides a multi-scale  approach to the optimal liquidation  problem with randomly varying volatility.
\end{abstract}

\noindent
{\bf Keywords:}
Dynamic Programming (DP);
Hamilton-Jacobi-Bellman (HJB) Equation;
Limit Order (LO);
Market Order (MO);
Multi-scale Stochastic  Volatility Model;
Quadratic Variation.


\section{Introduction}

The optimal liquidation  problem  of large trades has been studied extensively in the micro-structure literature.
Two major  sources of risk faced by large traders are:
(i)  inventory risk arising from uncertainty in   asset  value; and
(ii) transaction costs arising from market friction.
In a frictionless and competitive market, an asset can be traded with any amount at any rate without affecting the market price of the asset.
The optimal liquidation  problem then becomes an optimal stopping problem.
In an incomplete market, the optimal liquidation  problem involves delicate market-micro-structural issues.   
The impacts of transaction costs on optimal liquidation  and optimal portfolio selection have been studied via various mechanisms \cite{AC1,RA1,Almgren3, RA2, DN90, SS94}.

In this paper, we adopt a simple, but practical market impact model to study some aspects of the optimal liquidation   problem.
The  model is phenomenological and not directly based on the fine  details of micro-structure though it may primarily be related to some literature on market micro-structure.
Following Almgren and Chriss \cite{AC1}, we decompose
the price impact into temporary price impact and permanent price impact.
Temporary impact refers to temporary imbalance in supply/demand caused by   trading.
It disappears immediately when trading activities  cease.
Permanent impact means changes in the ``equilibrium'' price due to  trading,
which lasts at least for the whole process of  liquidation.

We generalize the work of  Almgren and Chriss \cite{AC1}  to a more realistic and flexible    framework  in which  multiple venues are available for the trader  to submit his/her trades.
We mainly  consider short-term liquidation problem for a large trader who experiences temporary and permanent market impact.
Two broad classes of problems are addressed in this paper which we believe are representative.
The first one is the case  with constant  volatility.
This assumption considerably simplifies the problem and allows us to exhibit the essential features of liquidation across multiple venues without losing ourselves in complexities.
The second  one is the case when volatility varies randomly throughout  the trading  horizon.
In this case,  we present a ``pure'' stochastic volatility approach,
in which the volatility is modeled as an It\^{o} process driven by a Brownian motion
that has a component independent of the Brownian motion driving the asset price.
In comparison with  Almgren's work in \cite{RA2},
we mainly focus on a special class of volatility model,
the time-scale volatility model proposed in \cite{Fouque00, Fouque11}:
$$
  d\nu_t=\epsilon(m-\nu_t)dt+\xi\sqrt{\epsilon} dW_t,
$$
and work in the regime $  \epsilon\ll 1$.
The separation of time-scales provides
an efficient way to identify and track the effect of stochastic volatility,
which is desirable from the practical perspective.

We also generalize our basic model to include the usage  of limit orders.
Different from market orders, limit orders are designed to give investors
more flexibility over the buying and selling prices of their trades.
The most unfavorable feature of limit orders might be the execution risk,
i.e., a successful execution is not guaranteed.
Cartea and Jaimungal \cite{CJ15} have constructed optimal strategies for this problem under temporary-impact-only assumption.
In their model, and also in ours, investors are allowed to trade both via limit sell orders, whose execution are uncertain, and via market sell orders,
whose execution are immediate but costly.
Limit orders in their model are allowed to be continuously submitted.
Market orders, however,  are only for one unit, and are executed at a sequence of increasing stopping times.
This model works well for small- and mid-size orders.
However, when it comes to large orders, execution cannot be guaranteed.
Furthermore, the optimal stopping time setting makes the problem difficult to be solved.
To ensure execution and  track the trace of liquidating strategy,
we model the market orders strategy as a continuous control problem.
When other market participants' buy market orders arrive,
limit sell orders will be used instead to take advantage of the price gap.

As mentioned in Rudloff et al. \cite{Rudloff},
a major reason for developing dynamic models instead of static ones is the fact that
one can incorporate the flexibility of dynamic decisions to improve our objective function.
Time-inconsistent criteria are generally not favorable to introduce in the study,
since the associated policies are sub-optimal.
The mean-variance criterion is popular for taking both return and risk into account.
However, the mean-variance criterion may induce a potential problem of
time-inconsistency, i.e., planned and implemented policies are different,
and make the problem complicated.
Hence, to take both return and risk into account, instead of adopting
the mean-variance criterion,  we are most interested in the mean-quadratic
optimal agency liquidation  strategies,
as they are proved to be time-consistent in \cite{PA1, PA2}.

In the following, we outline the major contributions of this paper.
First, we obtain closed-form solutions for optimal liquidation strategy under the constant volatility
approach for any level of risk aversion.
This leads to an efficient frontier of optimal strategies.
Each point on this frontier represents a strategy with the minimal level of cost for its level of risk-aversion.
In addition, we also  show that in the presence of multiple venues, institutional traders might hide their liquidating purposes and reduce the permanent market impacts through splitting their orders across multiple venues.
When more trading venues spring up in financial market, the liquidity of the market enhances.
This provides good suppliers of liquidity to the traders, and that the trader with more choices of venues to submit his/her trades may be willing to close his/her position as soon as possible.
Second, we present a  framework for the liquidation  problem that is general
enough to include the effect of stochastic volatility.
Moreover, it is  tractable  and the parameters
can be estimated efficiently on large datasets that are increasing available.
Third, we study the effect of incorporating limit orders so as  to detect market information.
In general, for a pure liquidation problem, people do not include the usage  of limit orders. 
But in our study of including limit orders, we find that, realizing the information carried by other market participants, traders will slow down their liquidating speed so as to get profit from market momentum.

The remainder of this paper is organized as follows.
Section 2 devotes to model building for the execution problem
in the presence of multiple trading venues.
The Profit \& Loss  (P\&L) of trading and  a mean-quadratic-variation criterion are
proposed and discussed in this section.
The solutions and numerical results for the constant volatility approximation are presented in Section 3.
Extension of the results to  the stochastic volatility approximation are then discussed in Section 4.
An extension of our model to the incorporation of limit orders is studied in Section 5.
The final section summarizes the results.

\section{Problem Setup}

In this section, we first present our liquidation problem in the presence of multiple trading venues based  on the principle of no arbitrage.
We then discuss the optimal liquidating strategies.

\subsection{The Trader's Liquidation Problem}

We consider an institutional  trader.  
Beginning at time $  t=0$, he/she has a liquidation target of $Q$ shares,
which must be completed by time $t=T$.
The number of shares remaining to liquidate at time $t$ is the trajectory $ X_t$,
with $ X_0=Q$ and $X_T=0$.
Suppose there are $N$ distinct venues for the trader to submit his/her trades.
The rate of liquidating in Venue $n$ ($n=1,2,\ldots, N$) is  $ \theta^{(n)}_t$.
Thus,  
$$
\theta^{(1)}_t+\cdots+\theta^{(N)}_t=-\frac{dX_t}{dt}.
$$
For a liquidation program, $ Q>0$, we expect $  X_t\ge 0$ is decreasing
and $  \theta^{(n)}_t\ge0$ (a buy program may be modeled similarly).
Here, $  \theta^{(n)}_t$ ($n = 1, 2, \ldots, N$), are the choice variables or
decision variables of the trader at time $t$ based on 
the observed information available at  time $t$.
In general, the trajectory $X_t$ depends on the price movements
and  market conditions that are discovered during trading, so it is a stochastic process.

Consider a probability space $ (\Omega, \mathcal F, \mathcal P)$ endowed with a filtration
$\{\mathcal F_t\}_{t\ge 0}$ representing the information structure available to the trader. Based on this probability space, we introduce the notion of admissible control.

\begin{definition}
A stochastic process  $  \boldsymbol\theta(\cdot)=\{(\theta_t^{(1)} , \ldots, \theta^{(N)} _t), 0\le t\le T\}$ is called an admissible control process if the following conditions hold:

\begin{description}
\item[(i)] {\bf(Adaptivity)} For each time  $  t\in[0,T]$, 
$\boldsymbol\theta_t=(\theta_t^{(1)},\cdots,\theta_t^{(N)}) $ is $\mathcal F_t$-adapted;
\item[(ii)]{\bf(Non-negativity)} $  \boldsymbol\theta_t\in \mathbb R_+^N$, where 
$ \mathbb R_+^N$ is the set of nonnegative real-valued  $N$-dimensional vectors;

\item[(iii)]{\bf(Consistency)} 
$$
 \displaystyle  \int_0^T\sum_{n=1}^N\theta^{(n)} _tdt\le Q;
$$
\item[(iv)]{\bf(Square-integrability)}
$$
 \displaystyle  \mathbb E\left[\int_0^T[|\theta_t^{(1)} |^2+\cdots+|\theta_t^{(N)} |^2]dt\right]<\infty;
$$

\item[(v)]{\bf($L_\infty$-integrability)}
$$
 \displaystyle  \mathbb E\left[\max_{0\le t\le T}[|\theta_t^{(1)} |+\cdots+|\theta_t^{(N)} |]\right]<\infty.
$$
\end{description}
For convenience, we let $  \Theta_t$ denote the collection of admissible controls with respect to  time $  t  (<T)$, and let  $  \widehat\Theta_t$ denote the collection of controls only  satisfying the conditions (i), (iv) and (v).
\end{definition}

Suppose the stock price evolves over time according to\footnote{When long-term strategies are considered,
it is more reasonable  to consider geometric rather than arithmetic Brownian motion.
In this paper,  we mainly  focus on short-term liquidating strategies, the total fractional price changes over such a short time are relatively  small, and hence the difference between arithmetic and geometric Brownian motions can  be negligible.}

\begin{equation}\label{basic}
  d S_t= \sigma_t dW_t
\end{equation}
where $  \{W_t\}_{t\ge0}$ is a standard Brownian motion with filtration
$\{{\cal F}_t\}_{t\ge0}$, and $  \sigma_t>0$ is the absolute volatility of the
stock price, which can be
(i) constant;
(ii)  time-dependent;
(iii) volatility depending  on the current stock price $  S_t$ and time $t$;  and
(iv) volatility  driven by an additional random process.

In Eq. (\ref{basic}), the drift term is set to be  zero, which means that  we expect no obvious trend in its future movement.
The total time $T$ is usually one day or less, the drift is generally not significant over such 
a short trading horizon.

\subsection{The Market Impact Model}

Generally, risky assets, especially for  those with low liquidity, exhibit price impacts due to the feedback effects of  trader's liquidating strategies.
Price impact refers to how the price moves in response to an order in the market. 
Small orders usually have insignificant impact, especially for liquid stocks.
Large orders, however, may have a significant impact on the price.
Investors, especially institutional investors, must keep the price impact  in mind when making investment decisions.

As discussed in Almgren \cite{RA1, RA2}, the price received on each trade is affected by the rates of buying and selling, both permanently and temporarily.
The {\it temporal impact} is related to the liquidity cost faced by the investor while the 
{\it permanent impact} is  linked to information transmitted to the market by the investor's trades.
Almgren \cite{RA2} proposed a linear market impact model to describe the dynamics of the asset price (a single venue is considered):
\begin{equation}\label{actually}
  dS^I_t= d S_t -\eta^{per}\theta_t dt.
\end{equation}
In this model, $\eta^{per}>0$ is the {\it coefficient of permanent impact} and  
is an absolute coefficient rather than fractional.
In this section, in view of  Almgren's linear price impact model,  we consider the case 
of multiple venues  based on  equilibrium and no-arbitrage.

Different from the single-venue case in Eq. (\ref{actually}),
each venue's  price in the multiple-venue case
depends not only on its internal transactions but also on its competitor's deals.
When one trader submits his orders to a venue,
he will lead to a  direct price impact in this venue.
Other traders, being aware of this impact, will adjust their order scheduling
to benefit from favorable prices across the different venues and thus affect prices.
Suppose that there are $N$ venues, where the same financial instrument can be traded simultaneously, namely Venue $1$, Venue $2$, $\ldots$, Venue $N$.
The following proposition provides a price impact model based on Almgren's linear price impact model.

\begin{proposition}
Assume that market  liquidity   remains unchanged  over $  [0,T]$,
and that there exists no arbitrage opportunity in the  financial market.
Based on the linear market impact model in Almgren \cite{RA2},
the affected  asset price  in the presence of $  N$ trading  venues is given by
$$
  dS^I_t=dS_t- \eta^{per}(\beta^{(1)} \cdot \theta^{(1)} _t+\cdots+ \beta^{(N)} \cdot \theta^{(N)} _t)dt
$$
where $  \eta^{per}>0$ is the coefficient of permanent impact,
$  (\theta_t^{(1)} , \cdots, \theta_t^{(N)} )$  is the trader's liquidating speed, and
$$
\begin{array}{l}
 \displaystyle  \sum_{n=1}^N \beta^{(n)} =1, \quad \beta^{(n)} \in(0,1)
\end{array}
$$
with  $\beta^{(n)}$ describing  Venue $n$'s  market efficiency.
\end{proposition}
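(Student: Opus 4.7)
The plan is to build the multi-venue price dynamics in three passes: start from the Almgren single-venue model applied to each venue in isolation, superimpose the venues via a no-arbitrage argument, and finally pin down the normalization $\sum_n\beta^{(n)}=1$ through a conservation-of-impact argument.

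First, I would assume that if only Venue $n$ were open, the Almgren linear model in Eq.~(\ref{actually}) would give a venue-specific impacted price $dS^{I,(n)}_t=dS_t-\eta^{(n)}\theta^{(n)}_t\,dt$, where $\eta^{(n)}>0$ measures the depth/liquidity of that venue. With all $N$ venues simultaneously active, the same instrument trades at $N$ potentially different prices. Since market liquidity is assumed constant on $[0,T]$ (so the coefficients do not fluctuate) and no arbitrage opportunity exists, any cross-venue price gap would be instantaneously arbitraged away; hence there must be a single equilibrium impacted price $S^I_t$ with $S^{I,(1)}_t=\cdots=S^{I,(N)}_t=S^I_t$. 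This step is the only place where no-arbitrage is used, and it removes the $N-1$ degrees of freedom that would otherwise separate the venues.

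Second, writing $dS^I_t=dS_t-\Lambda_t\,dt$ with a common impact rate $\Lambda_t$, linearity inherited from the base Almgren model forces $\Lambda_t=\sum_{n=1}^N c^{(n)}\theta^{(n)}_t$ for some nonnegative constants $c^{(n)}$ (the coefficients must be time-independent by the constant-liquidity hypothesis and they cannot depend on state variables without contradicting the linearity of Eq.~(\ref{actually})). Extracting a common scale factor, I define
\[
\eta^{per}:=\sum_{n=1}^N c^{(n)},\qquad \beta^{(n)}:=\frac{c^{(n)}}{\eta^{per}},
\]
which immediately gives $\sum_n\beta^{(n)}=1$ and $\beta^{(n)}\in(0,1)$ provided that every venue has strictly positive liquidity (so that $c^{(n)}>0$ for each $n$ and $c^{(n)}<\eta^{per}$ because the other venues contribute nontrivially). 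The interpretation of $\beta^{(n)}$ as Venue $n$'s market efficiency, i.e.\ the fraction of the aggregate permanent price impact attributable to a unit of trading in Venue $n$, is then automatic.

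The main obstacle, I expect, is not the derivation itself but the justification that the cross-venue interaction is captured \emph{purely} through a linear weighted aggregation of the $\theta^{(n)}_t$ with a \emph{single} common coefficient $\eta^{per}$. A priori, no-arbitrage only guarantees a common price; it permits, for instance, nonlinear or path-dependent cross-venue terms. To exclude those one must invoke the standing linearity of the Almgren framework and the stationarity encoded in the constant-liquidity assumption, together with a degeneracy check (in the single-venue limit $N=1$ one must recover $\beta^{(1)}=1$ and Eq.~(\ref{actually})). Making this reduction explicit---and arguing that the resulting weights $\beta^{(n)}$ are intrinsic to the venues rather than depending on the order-placement history---is the delicate part of the argument and should be stated carefully so that the reader sees the proposition as a direct consequence of no-arbitrage plus the Almgren linear structure rather than as an additional modelling postulate.
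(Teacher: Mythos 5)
Your argument has a genuine gap at precisely the point the proposition is supposed to have content. You obtain $\sum_n\beta^{(n)}=1$ by \emph{defining} $\eta^{per}:=\sum_n c^{(n)}$ and $\beta^{(n)}:=c^{(n)}/\eta^{per}$, so the normalization is a tautology: any positive linear aggregation whatsoever can be written this way. Likewise the bound $c^{(n)}<\eta^{per}$ ``because the other venues contribute nontrivially'' is circular once $\eta^{per}$ is the sum. Crucially, your $\eta^{per}$ is then just a name for an aggregate constant and is never identified with the coefficient of permanent impact from the single-venue Almgren model, which is what the proposition (and the paper's later use of it) requires. In the paper's proof the constraint $\sum_n\beta^{(n)}=1$ is genuinely \emph{derived}: each venue's price is first knocked down by the \emph{same} direct impact $\eta^{per}\theta^{(n)}_t\Delta t$ (the single-venue coefficient), producing a cross-venue price gap; arbitrageurs then close the gap, with Venue $1$ absorbing a fraction $x$ and Venue $2$ a fraction $y$ of it; equating the two adjusted prices (no arbitrage) forces $x+y=1$, and the common price comes out as $S^I_t+\int\sigma\,dW-\eta^{per}(y\theta^{(1)}_t+x\theta^{(2)}_t)\Delta t$, i.e.\ $\beta^{(1)}=y$, $\beta^{(2)}=x$. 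So the weights are the gap-absorption fractions --- hence the ``market efficiency'' interpretation --- and their summing to one is exactly where the no-arbitrage hypothesis bites.

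A secondary divergence: you start by giving each venue its own permanent-impact coefficient $\eta^{(n)}$, whereas the paper deliberately uses one common $\eta^{per}$ for the direct (pre-equilibration) impact in every venue and locates venue heterogeneity in the temporary coefficients $\eta^{tem,(n)}$ and the efficiencies $\beta^{(n)}$. You correctly flag in your last paragraph that justifying the single common coefficient and the linear aggregation is the delicate step, but your construction does not supply that justification --- it replaces the paper's explicit two-stage mechanism (direct impact, then proportional convergence) with a normalization convention. To repair the argument you would need to model the cross-venue equilibration explicitly, as the paper does, so that both the identification of $\eta^{per}$ and the constraint $\sum_n\beta^{(n)}=1$ are consequences rather than definitions.
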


\begin{proof}
Without loss of generality,  we prove that Proposition 1 holds for  $ N=2$.
Suppose that our investor's trading scheduling over a small interval
$ [t,t+\Delta t)$  is $  (\theta^{(1)} _t,\theta^{(2)}_t)$ and that
$ \theta^{(1)} _t<\theta^{(2)}_t$: liquidating $  \theta^{(1)} _t\Delta t$ shares in
Book $1$, and $ \theta^{(2)}_t\Delta t$ shares in Book $2$.
Here $  \theta^{(1)} _t$ and $  \theta^{(2)}_t$ are interpreted as liquidation rates in
Venue $1$ and Venue $2$, respectively.
Assume that trades occur immediately after time  $t$.
With the assumption of linear price impact,  stock prices in the two trading  venues,
immediately after the execution of the orders, are drawn down to
$ p_t^{(1)}=S^I_{t}-\eta^{per}\theta^{(1)} _t\Delta t$
and $p_t^{(2)}=S^I_{t}-\eta^{per}\theta^{(2)}_t\Delta t$, respectively.
Obviously, we have $  p_t^{(1)}>p_t^{(2)}$.
Other investors, being aware of this arbitrage opportunity,
will adjust their trading schedules, buying from Venue $2$  and selling to Venue $1$,
to benefit from favorable prices across the two venues.
Suppose the financial market processes linear convergence,
and the convergence speed is very quickly and proportional to market's efficiency.
The adjusted stock prices at time $  t+\Delta t$ are then given by 
$$
 \displaystyle  p_{t+\Delta t}^{(1)}
=S^I_{t}+\int_t^{t+\Delta t} \sigma_udW_u-\eta^{per}\theta^{(1)} _t\Delta t-x(\eta^{per}\theta^{(2)}_t-\eta^{per}\theta^{(1)} _t)\Delta t
$$
and
$$
  \displaystyle  p_{t+\Delta t}^{(2)}=S^I_{t}+\int_t^{t+\Delta t} \sigma_udW_u-\eta^{per}\theta^{(2)}_t\Delta t +y(\eta^{per}\theta^{(2)}_t-\eta^{per}\theta^{(1)} _t)\Delta t,
$$
respectively. 
Under the no-arbitrage principle, $p_{t+\Delta t}^{(1)}= p_{t+\Delta t}^{(2)}$
which yields $x+y=1$,
and
$$
 S^I_{t+\Delta t}
= p_{t+\Delta t}^{(1)}
=  p_{t+\Delta t}^{(2)}
= S^I_{t}+\int_t^{t+\Delta t}\sigma_udW_u-\eta^{per}(y\theta_t^{(1)} +x\theta_t^{(2)} )\Delta t.
$$
Letting  $ x=\beta$ and $ \Delta t\to 0$, we obtain
$$
  dS^I_t=dS_t-\eta^{per}[(1-\beta)\theta^{(1)}_t+\beta \theta^{(2)} _t]dt.
$$
\end{proof}

Different from the affected stock price (permanent price impact),
the actual price received on each trade (temporary price impact) varies with place and time
\begin{equation}\label{kkl}
 \widetilde S^{I,(n)}_t=S^I_t-\eta^{tem,(n)}\theta^{(n)} _t.
\end{equation}
Here $  \widetilde S_t^{I,(n)}$ is the price actually received in Venue $n$
($n=1,2,\ldots,N$),
 and $\eta^{tem, (n)}$ is the coefficient of temporary market impact in that venue.
A number of market impact models have been considered in the literature \cite{JG10},
but this simple one is good enough for our concerned problems and discussions.

\subsection{Estimation of Model Parameters}

Regarding the parameter estimation of $  \sigma_t$ in Eq. (\ref{basic}) and
$ (\eta^{per}, \eta^{tem, (n)})$  in Eq. (\ref{actually}) and Eq. (\ref{kkl}),
Almgren \cite{RA2}
proposed a variety of methods using high-frequency market data:
\begin{description}
\item[(i)] For volatility $  \sigma_t$, to filter out noise associated with market details and  obtain reliable estimates, one could estimate $ \sigma_t$  by using market data from the preceding 5 minutes, which typically would contain hundreds of trades and potentially thousands of quote updates;
\item[(ii)] One proxy for price impact parameter $ \eta^{per}$ would be the realized trade volume over the last few minutes: if more people are trading actively in the market, one would be able to liquidate a certain quantity with lower slippage;
\item[(iii)] For price impact parameters $ \eta^{tem, (n)}$, one proxy would be  the trade volume resting at or near the bid price if one is a seller (at or near the ask price if one is a buyer): a large volume there might indicate the presence of a motivated buyer and a good opportunity for one  to go in as a seller with low impact.
\end{description}

\subsection{The Gain/Loss of Trading }
 
Let $  C_t$ denote the cash flow accumulated by time $t$.
Assume that investors withhold the liquidation proceeds, or simply assume that the risk-free interest rate $  r=0$,
we have
$$
C_t
=\int_0^t(\widetilde S^{I,(1)}_s \theta^{(1)}_s+ \cdots+\widetilde S^{I,(N)}_s \theta^{(N)} _s)ds, \quad t< T.
$$
Given the state variables $  (S_t^I, C_t, X_t)$
the instant before the end of trading $  t=T-$,
we have one final liquidation (if necessary) so that the number of shares owned at
$  t=T$ is $  X_T=0$.
The liquidation value $  C_T$ after this final trade is defined to be
$$
  C_T=C_{T-}+ X_{T-}(S^I_{T-}-\mathcal C^o(X_{T-})),
$$
where $  \mathcal C^o(q)$, a non-negative increasing function in $q$,
represents the market impact  costs  the trader incurs when liquidating the outstanding position $  X_{T-}$.
The gain/loss (G/L) of trading, relative to the arrival price benchmark, is the difference between the total dollar gained/lost  by liquidating $Q$ 
shares and the initial market value: $\displaystyle   \mathcal R_T=C_T-QS_0$.
After integrating by parts and using
$$
  X_t=Q-\int_0^t(\theta^{(1)}_s + \cdots+ \theta^{(N)} _s)ds,
$$
we have
\begin{equation}\label{R}
\begin{array}{lll}
 \displaystyle \mathcal R_T&=&C_T-QS_0\\
&=&\displaystyle \int_0^{T-}\sigma_t X_tdW_t-X_{T-}\mathcal C^o(X_{T-})
\displaystyle -\eta^{per}\int_0^{T-}X_t(\beta^{(1)}\theta^{(1)}_t
+\cdots+\beta^{(N)}\theta^{(N)} _t)dt \\
&&\displaystyle - \int_0^{T-}\left[\eta^{tem,(1)}(\theta^{(1)}_t)^2+\cdots+\eta^{tem,(N)}(\theta^{(N)} _t)^2\right]dt.
\end{array}
\end{equation}

Generally speaking, investors are risk averse and demand a higher return for a more risky investment.
The mean-variance criterion is useful when taking both return and
risk into account.
However, the mean-variance criterion may induce a potential problem of
time-inconsistency, i.e., planned and implemented policies are different,
and hence complicate the problem.
To avoid  the time-inconsistent problem incurred by mean-variance criterion,
instead of using the variance/standard deviation as the risk measure,
one can adopt the quadratic variation,
$$
  \int_0^{T-}\sigma^2_uX_u^2du.
$$
It accumulates the future value of the instantaneous risk, i.e., $  (X_udS_u)^2$, 
due to holding $  X_u$ units of the risky asset $S$.
When properly normalized, the quadratic variation can also be interpreted as
the average standard deviation per unit time, see, for instance, Brugiere \cite{PB96}.

Let $  (S_t, \sigma_t, X_t)=(s, \sigma, q)$ be the initial state.  At any time $  t\in[0, T)$,  the optimal policy should solve the following optimization problem:
\begin{equation}\label{mean-quadratic}
 J(t, s,\sigma,  q)=\max_{(\theta^{(1)}_u,\cdots, \theta^{(N)} _u)_{ t\le u\le T}\in \Theta_t}\left\{ \mathbb E_{t}[\mathcal R_T-\mathcal R_t] -\lambda\cdot\mathbb E_t\left[\int_t^{T-} \sigma^2_uX_u^2du\right]\right\},
\end{equation}
where $  \mathbb E_t[\cdot]$ denotes the conditional expectation with respect to the filtration $\mathcal F_t$. 
The following proposition discusses the time consistency of the optimal strategies
and its proof is given in Appendix A.

\begin{proposition} {\bf (Time consistency of the optimal strategies).}
 Let $  (t_1,s_1,\sigma_1,  q_1)$ be some state at time $  t_1$ and
$ \boldsymbol\theta^*_{t_1,s_1, \sigma_1, q_1}(\cdot)
=\big(\theta^{(1)}_{t_1,s_1,\sigma_1, q_1}(\cdot), \ldots, \theta^{(N)} _{t_1,s_1, \sigma_1,  q_1}(\cdot)\big)^T$
be the corresponding optimal strategy.
Let  $  (t_2, s_2, \sigma_2, q_2)$ be some other state at time $  t_2>t_1$ and
$
  \boldsymbol\theta^*_{t_2, s_2, \sigma_2, q_2}(\cdot)=\big(\theta^{(1)}_{t_2, s_2, \sigma_2, q_2}(\cdot), \ldots, \theta^{(N)} _{t_2, s_2, \sigma_2, q_2}(\cdot)\big)^T
$
be the corresponding optimal strategy.
It follows that the optimal controls of Problem (\ref{mean-quadratic}) are {\it time-consistent} in the sense that for the same state 
$(t^\prime, s^\prime, \sigma^\prime, q^\prime)$ at a later time 
$t^\prime>t_2$,
 \begin{equation}\label{proof}
   \boldsymbol\theta^*_{t_1,s_1, \sigma_1, q_1}(t^\prime, s^\prime, \sigma^\prime,  q^\prime)= \boldsymbol\theta^*_{t_2, s_2, \sigma_2,  t_2}(t^\prime, s^\prime, \sigma^\prime, q^\prime).
  \end{equation}
\end{proposition}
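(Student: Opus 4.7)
The plan is to reduce the claim to the standard dynamic programming principle (DPP), exploiting the fact that---unlike a genuine mean--variance criterion---the objective in (\ref{mean-quadratic}) is already a linear functional of the trajectory distribution, since the quadratic variation enters as an expected integral rather than as a squared expectation. Once this additive structure is exposed, time consistency follows from Bellman's principle in the standard Markovian way.

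First, I would substitute (\ref{R}) into (\ref{mean-quadratic}) and note that the stochastic-integral piece $\int_t^{T-}\sigma_u X_u\, dW_u$ is a true martingale under the admissibility conditions (iv)--(v), so it vanishes under $\mathbb{E}_t[\cdot]$. Packaging the permanent-impact drift, the temporary-impact cost, and the running risk $\lambda\sigma_u^2 X_u^2$ into a single running cost $\ell(u, X_u, \sigma_u, \boldsymbol\theta_u)$, the value function rewrites as
\begin{equation*}
J(t, s, \sigma, q) = \sup_{\boldsymbol\theta(\cdot) \in \Theta_t} \mathbb{E}_t\left[-\int_t^{T-} \ell(u, X_u, \sigma_u, \boldsymbol\theta_u)\, du - X_{T-}\mathcal{C}^o(X_{T-})\right],
\end{equation*}
which is a textbook stochastic-control objective over the Markovian state $(S_u, \sigma_u, X_u)$.

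Next, I would establish the DPP: for any $t_1 < t_2 < T$ and any initial state $(s_1, \sigma_1, q_1)$,
\begin{equation*}
J(t_1, s_1, \sigma_1, q_1) = \sup_{\boldsymbol\theta \in \Theta_{t_1}} \mathbb{E}_{t_1}\left[-\int_{t_1}^{t_2} \ell(u, X_u, \sigma_u, \boldsymbol\theta_u)\, du + J(t_2, S_{t_2}, \sigma_{t_2}, X_{t_2})\right].
\end{equation*}
The ``$\le$'' inequality is immediate from the tower property together with the definition of $J(t_2,\cdot)$; the ``$\ge$'' direction requires a measurable-selection argument that pastes an $\varepsilon$-optimal feedback control on $[t_2, T)$ onto an arbitrary control on $[t_1, t_2)$ and verifies that the concatenation still lies in $\Theta_{t_1}$. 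Time consistency then follows at once: if $\boldsymbol\theta^*_{t_1, s_1, \sigma_1, q_1}$ attains the outer supremum, its restriction to $[t_2, T)$ must, almost surely along the controlled trajectory, attain $J(t_2, S_{t_2}, \sigma_{t_2}, X_{t_2})$---otherwise one could strictly improve. Since $J$ depends only on the current state, the optimal feedback at any visited state $(t', s', \sigma', q')$ is determined by that state alone, which is precisely (\ref{proof}).

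The main obstacle I anticipate is the pasting/measurable-selection step in the DPP: one must verify that the admissibility class $\Theta_t$ is closed under concatenation in a way compatible with conditions (iii)--(v) of Definition 1, and that a sufficiently regular selector exists for the inner supremum. The integrability and boundedness requirements imposed there are tailored to make this work, but it is the one place where care is needed; every other piece of the argument is routine once the criterion has been cast into additive-plus-terminal form.
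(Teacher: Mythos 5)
Your proposal is correct and follows essentially the same route as the paper's Appendix~A proof: both reduce the claim to the dynamic programming principle for the additive (running cost plus terminal cost) criterion, proving the ``$\le$'' direction by bounding the continuation payoff by the value function and the ``$\ge$'' direction by concatenating an arbitrary control on the first subinterval with the (near-)optimal control on the remainder, after which time consistency is immediate. The only difference is cosmetic: you explicitly kill the stochastic-integral term by the martingale property and flag the measurable-selection issue, whereas the paper carries the integral through its displays and pastes the exact optimizer without comment.
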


\subsection{Hamilton-Jacobi-Bellman (HJB) Equation}

Since the optimal controls satisfy the Bellman's principle of optimality as shown in Appendix A, dynamic programming (DP) approach can be directly applied to this problem.
The optimal control $\boldsymbol\theta^*$ can be obtained by solving the following
 Hamilton-Jacobi-Bellman (HJB)  equation derived in Appendix A as follows:
$$
\mbox{(HJB-1)}\quad\left\{
\begin{array}{lll}
  \displaystyle  (\partial_t +\mathcal L) J    -\lambda  \sigma_t^2q^2\\
\displaystyle
  -   \min_{\boldsymbol\theta_t\in\Theta_t } \Big\{  \eta^{per}(\beta^{(1)}\theta^{(1)}_t+\cdots+\beta^{(N)}\theta^{(N)} _t)q+ \eta^{tem,(1)}(\theta^{(1)}_t)^2+\cdots+\eta^{tem,(N)}(\theta^{(N)} _t)^2\\
   \displaystyle +(\theta_t^{(1)} +\cdots+\theta^{(N)} _t)\; \partial_q J\ \Big\} =0\\
  J(T-,s, \sigma,  q)= -q\mathcal C^o(q),
\end{array}
\right.
$$
where $\mathcal L$ is the generator of the processes $  (S_t, \sigma_t)_{t\ge0}$.

Notice that the optimization problem included in  HJB-1 is a constrained  optimization problem with  constraints: (1) $\boldsymbol\theta_t\in\mathbb R_+^N$ and
(2) $\displaystyle \int_0^T\sum_{n=1}^N\theta_t^{(n)} dt\le Q$.
To solve this constrained optimization problem, we first consider relaxing  
the  constraints, namely, replacing $\Theta_t$ by $\widehat\Theta_t$,  and solving the following unconstrained  optimization problem:
$$
\mbox{(HJB-$1^\prime$)}\quad \left\{
\begin{array}{lll}
  \displaystyle  (\partial_t +\mathcal L) J    -\lambda  \sigma_t^2q^2\\
\displaystyle
   -   \min_{\boldsymbol\theta_t\in\widehat\Theta_t } 
\Big\{  \eta^{per}(\beta^{(1)}\theta^{(1)}_t+\cdots+\beta^{(N)}\theta^{(N)} _t)q+ \eta^{tem,(1)}(\theta^{(1)}_t)^2+\cdots+\eta^{tem,(N)}(\theta^{(N)} _t)^2\\
  \displaystyle +(\theta_t^{(1)} +\cdots+\theta^{(N)} _t)\; \partial_q J\ \Big\} =0\\
   J(T-,s, \sigma,  q)= -q\mathcal C^o(q),
\end{array}
\right.
$$
 and then prove that under some assumptions, the two optimization problems, 
included in HJB-1 and HJB-$1^\prime$, respectively,  are equivalent.
From the HJB-$1^\prime$ equation,
the optimal control without any constraint  is
\begin{equation}\label{ef}
 \theta_t^{n,*}=-\frac{1}{2\eta^{tem, (n)}}(\partial_q J+\eta^{per}\beta^{(n)} q), 
\quad {\rm for} \ n=1,\cdots, N.
\end{equation}
The corresponding  value function $J$ then solves the following 
partial differential equation (PDE):
\begin{equation}\label{PDE}
\left\{
\begin{array}{l}
    \displaystyle  (\partial_t +\mathcal L) J    -\lambda  \sigma_t^2q^2+\sum_{n=1}^N \frac{1}{4\eta^{tem,(n)}}(\partial_q J+\eta^{per}\beta^{(n)} q)^2=0\\
    \displaystyle J(T-,s,\sigma, q)=-q \mathcal C^o(q).
  \end{array}
  \right.
\end{equation}
\quad\\

In the following  sections, we  exhibit  solutions to Eq. (\ref{PDE})  with linear penalty: 
$\mathcal C^o(q)=Kq$  to two special cases:
(i) constant volatility and (ii) {\it slow mean-reverting} stochastic volatility.
In each case, we work on the ansatz that
$$
  J(t, s,\sigma, q)=f(t, s,\sigma)+g(t,s,\sigma) q+h(t, s, \sigma) q^2.
$$
The first case has explicit liquidating formula.
For the {\it slow mean-reverting} stochastic volatility approximation discussed in this paper,  there is a multi-scale argument that reduces the optimal
liquidation PDE to a formal series expansion that can easily be solved explicitly.
We present the argument and the accuracy of this approach in Section 3.3.

\section{Constant Volatility}

The most illuminating case is that $  \sigma_t$ is constant, i.e., $  \sigma_t\equiv \sigma$.
This considerably simplifies the problem and allows us to exhibit the essential features of liquidating across multiple venues without losing ourselves in complexities.
The problem becomes essentially the well-known stochastic linear regulator with time dependence.

With the constant volatility assumption, $J(t, s, \sigma, q)=J(t,  q; \sigma )$ independent of $s$, since  the terminal data does not depend on $s$ and  
HJB-$1^\prime$    introduces no $s$-dependence.
We look for a solution quadratic in the inventory variable $q$:
\begin{equation}\label{lb1}
  J(t,q;\sigma)=f(t;\sigma)+g(t;\sigma)q+h(t;\sigma)q^2.
\end{equation}
With this assumption, the optimal control can be rewritten in the form of 
\begin{equation}\label{rew}
  \theta_t^{n,*}=-\frac{1}{2\eta^{tem, (n)}}[(2h(t;\sigma)+\eta^{per}\beta^{(n)})q
+g(t;\sigma)].\end{equation}
To solve Eq. (\ref{PDE}), $f, g$ and $h$ must satisfy the following  ordinary differential equations (ODEs):
\begin{equation}\label{ODE}
 \begin{array}{l}
 \left\{
 \begin{array}{l}
   \displaystyle \dot{h} =\lambda\sigma^2-\sum_{n=1}^N\frac{1}{4\eta^{tem, (n)}}(2h
+\eta^{per}\beta^{(n)})^2\\
  \displaystyle h(T-;\sigma)=-K
 \end{array}
 \right.\\
 \\
 \left\{
 \begin{array}{l}
  \displaystyle \dot{g}=-\left(\sum_{n=1}^N\frac{1}{2\eta^{tem, (n)}}(2h+\eta^{per}\beta^{(n)})\right)g\\
  \displaystyle g(T-;\sigma)=0
 \end{array}
 \right.\\
 \\
 \left\{
 \begin{array}{l}
  \displaystyle \dot{f}=-\left(\sum_{n=1}^N\frac{1}{4\eta^{tem, (n)}}\right)g^2\\
  \displaystyle f(T-;\sigma)=0.
 \end{array}
 \right.
 \end{array}
 \end{equation}
 It is straightforward to show that  $  g(t;\sigma)\equiv 0$ and $  f(t;\sigma)\equiv 0$.
If we  set
 $$
  a_n=\frac{1}{\eta^{tem, (n)}},  \quad
b_n=\frac{\eta^{per}\beta^n}{\eta^{tem,(n)}}, \quad
c_n= \frac{(\eta^{per}\beta^n)^2}{ 4\eta^{tem,(n)}},
$$
and 
$$
  a=\sum_{n=1}^Na_n,\quad
  b=\sum_{n=1}^Nb_n,\quad  c=\sum_{n=1}^N c_n,
$$
then the unknown function $ h(t;\sigma)$ solves the following first-order  ODE:
\begin{equation}\label{ho}
\left\{
\begin{array}{l}
  \displaystyle \dot{h}=\underbrace{\lambda\sigma^2+\frac{b^2-4ac}{4a}}_{\Delta_N}-a\left(h+\frac{b}{2a}\right)^2\\
  \displaystyle h(T-;\sigma)=-K.
\end{array}
\right.
\end{equation}
Eq. (\ref{ho}) is a first-order ODE with  constant coefficients, and  we can find the exact solution via direct integrations: 
\begin{equation}\label{h}
  h(t;\sigma)=
\left\{
\begin{array}{lll}
  \displaystyle \sqrt{\frac{\Delta_N}{a}}\cdot \frac{\varsigma e^{-2\sqrt{a\Delta_N}(T-t)}-1}{\varsigma e^{-2\sqrt{a\Delta_N}(T-t)}+1}-\frac{b}{2a},&&  \Delta_N\ge0\\
\\
  \displaystyle \sqrt{-\frac{\Delta_N}{a}}  \tan\Big(\arctan\big(\frac{b-2a K}{2\sqrt{-a\Delta_N}}\big)+ \sqrt{-a\Delta_N} (T-t)\Big) -\frac{b}{2a},&&  \Delta_N<0,
\end{array}
\right.
\end{equation}
where   the constant  $\varsigma$  is given by
$$
  \varsigma= \left(1-\frac{2Ka-b}{2\sqrt{a\Delta_N}}\right)\left(1+\frac{2Ka-b}{2\sqrt{a\Delta_N}}\right)^{-1}.
$$

 It is worth noting that 
$$
 \theta_t^{n,*}=-\frac{1}{2\eta^{tem, (n)}}(2h(t;\sigma)+\eta^{per}\beta^{(n)})X_t.
$$
and 
$$
\left\{
\begin{array}{l}
  \displaystyle \dot{X}_t=-\sum_{n=1}^N\theta^{n,*}_t\\
  \displaystyle X_0=Q.
\end{array}
\right. 
$$
Hence, we have
$$
 X_t
= Q\cdot \exp\left(\displaystyle \int_0^t\sum_{n=1}^N\frac{1}{2\eta^{tem, (n)}}(2h(u;\sigma) +\eta^{per}\beta^{(n)}) du\right).
$$
For $n=1,2,\ldots, N$, we have $b_n^2=4a_nc_n$.
According to H$\ddot{\rm o}$lder's inequality,   we have
$b^2-4ac\le 0$, and  the equality holds if and only if
(1) $  N=1$; or (2) the trading  venues have the same market efficiency, namely, 
$ \beta^{(1)}=\cdots=\beta^{(N)}=1/N$. 
We then have the following proposition and its proof is given in Appendix B.
\begin{proposition}
Assume that   the model parameters satisfy the condition\footnote{That is,  clearing fees associated with  the outstanding position $  X_{T-}$ dominate the  potential profit arising from arbitrage opportunities incurred by the permanent price impact and the  potential position risk involved by price fluctuations.}:
\begin{equation}\label{cond1}
  K>\displaystyle  \frac{b}{2a}+\sqrt{\frac{|\Delta_N|}{a}}.
\end{equation}
If $N=1$ or  the trading venues have the same market  efficiency, 
then $ h(t;\sigma)$ is a decreasing  function in $t$ and  
$2h(t;\sigma)+\eta^{per}\beta^{(n)}\le0$ for $n=1,\cdots, N$, which implies 
that 
\begin{description}
\item[(1)] $\theta^{n,*}_t \ge0$,  for any $  n=1,\cdots, N$; and that 
\item[(2)] $\displaystyle \int_0^T\sum_{n=1}^N \theta^{n,*}_tdt \le Q$.
\end{description}
Then the    control policy in Eq. (\ref{rew}) is optimal. 
\end{proposition}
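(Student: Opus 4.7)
The plan is to analyze the scalar ODE (\ref{ho}) for $h(t;\sigma)$ via a phase-line argument, locate $h(t;\sigma)$ strictly below the smaller of the two equilibria, and then translate that location into the advertised sign conditions on $2h(t;\sigma)+\eta^{per}\beta^{(n)}$.

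First I would observe that the hypothesis — either $N=1$ or $\beta^{(1)}=\cdots=\beta^{(N)}=1/N$ — is exactly the case of equality in the Cauchy--Schwarz/Hölder bound used just before the proposition, so $b^{2}=4ac$ and hence $\Delta_N=\lambda\sigma^2\ge 0$. This puts us in the first branch of Eq.~(\ref{h}) and reduces (\ref{ho}) to $\dot h=\Delta_N-a\bigl(h+b/(2a)\bigr)^{2}$ with $h(T-)=-K$, a Riccati equation with two equilibria $h_{\pm}=-b/(2a)\pm\sqrt{\Delta_N/a}$. After the shift and rescaling $\psi(t)=\sqrt{a/\Delta_N}\bigl(h(t)+b/(2a)\bigr)$, the dynamics become $\dot\psi=\sqrt{a\Delta_N}(1-\psi^{2})$, the logistic equation with stable equilibrium $\psi=+1$ and unstable equilibrium $\psi=-1$. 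Condition (\ref{cond1}) translates precisely to $\psi(T-)<-1$.

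Next I would run this flow backward from $T-$ to $0$. By uniqueness of solutions to the scalar ODE, a trajectory that starts strictly below the unstable equilibrium $\psi=-1$ cannot cross it in either time direction; going backward, $\psi$ can only increase towards $-1$ without reaching it. Therefore $\psi(t)<-1$, equivalently $h(t)<-b/(2a)-\sqrt{\Delta_N/a}$, for all $t\in[0,T-]$. Substituting back into (\ref{ho}) gives $\dot h(t)=\Delta_N-a(h(t)+b/(2a))^{2}<0$, so $h(\cdot;\sigma)$ is strictly decreasing on $[0,T]$. As a sanity check, the explicit formula in Eq.~(\ref{h}) yields $\varsigma<0$ under (\ref{cond1}), which is consistent with $h(t)$ lying permanently below $h_{-}$.

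The sign statement $2h(t;\sigma)+\eta^{per}\beta^{(n)}\le 0$ now follows from a one-line computation. Under either $N=1$ or equal efficiency, a direct calculation of $b/a=\sum_n\eta^{per}\beta^{(n)}/\eta^{tem,(n)}\big/\sum_n 1/\eta^{tem,(n)}$ gives $b/a=\eta^{per}\beta^{(n)}$ for every $n$; combined with the bound above this produces $2h(t;\sigma)+\eta^{per}\beta^{(n)}<-2\sqrt{\Delta_N/a}\le 0$, which yields claim (1) through Eq.~(\ref{rew}). For claim (2), the inventory ODE $\dot X_t=-\sum_n\theta^{n,*}_t$ together with $\theta^{n,*}_t\ge 0$ forces $X_t$ to be nonincreasing and strictly positive on $[0,T)$, so integrating gives $\int_0^T\sum_n\theta^{n,*}_t\,dt=Q-X_{T-}\le Q$.

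The main obstacle I anticipate is the phase-line step — rigorously ruling out that the backward-time trajectory crosses the unstable equilibrium $\psi=-1$. The cleanest way is to invoke uniqueness of the scalar ODE and note that $\psi\equiv -1$ is itself a solution, but one could alternatively verify directly from the closed-form expression in Eq.~(\ref{h}) that, with $\varsigma<0$, the ratio $(\varsigma e^{-2\sqrt{a\Delta_N}(T-t)}-1)/(\varsigma e^{-2\sqrt{a\Delta_N}(T-t)}+1)$ stays strictly less than $-1$ for all $t\in[0,T]$. Once this geometric fact is in place, the remainder is routine algebra.
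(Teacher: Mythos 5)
Your proof is correct and reaches the same key facts as the paper's (Appendix B), but it establishes the central inequality by a different mechanism. The paper works directly from the closed-form solution in Eq.~(\ref{h}): it checks that condition (\ref{cond1}) is equivalent to $-1<\varsigma<0$, reads off that the ratio $\frac{\varsigma e^{-2\sqrt{a\Delta_N}(T-t)}-1}{\varsigma e^{-2\sqrt{a\Delta_N}(T-t)}+1}$ is nonpositive, and concludes $2h+\eta^{per}\beta^{(n)}\le 0$ and the monotonicity of $h$ from the explicit expression. You instead never touch the formula: you normalize the Riccati equation to $\dot\psi=\sqrt{a\Delta_N}\,(1-\psi^2)$, note that (\ref{cond1}) says $\psi(T-)<-1$, and use uniqueness against the stationary solution $\psi\equiv-1$ to trap $\psi(t)<-1$ backward in time, which gives both $\dot h<0$ and $2h+b/a<-2\sqrt{\Delta_N/a}$ at once (your identity $b/a=\eta^{per}\beta^{(n)}$ under the symmetry hypothesis is the same reduction the paper uses implicitly when it writes $-\eta^{per}/(2N)$). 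Your route is more robust — it would survive time-dependent coefficients where no closed form exists — while the paper's is shorter given that Eq.~(\ref{h}) is already on the table. One small point to tighten: in your last paragraph the strict positivity of $X_t$ does not follow from ``$\theta^{n,*}_t\ge 0$ plus the inventory ODE'' (that only gives nonincreasing, and $\theta^{n,*}_t\ge0$ itself presupposes $X_t\ge0$); as in the paper, you should close the loop by noting that the feedback dynamics are linear, $\dot X_t=c(t)X_t$ with $c(t)=\sum_n\frac{1}{2\eta^{tem,(n)}}\bigl(2h(t;\sigma)+\eta^{per}\beta^{(n)}\bigr)\le0$, so $X_t=Q\exp\bigl(\int_0^t c(u)\,du\bigr)\in(0,Q]$, after which claims (1) and (2) follow exactly as you state.
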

Since
 $$
 \frac{\partial h(t;\sigma)}{\partial t} =-\frac{\partial h(t;\sigma)}{\partial \tau}<0,
 $$
 where $  \tau=T-t$ is the remaining  time to close of trading. 
  $  J(t,q; \sigma)=h(t;\sigma)q^2$ is a strictly decreasing function in $t$ and a strictly increasing function in $T$. 
  
Generally speaking, one's ability to bear risk is measured mainly in terms of objective factors, such as time horizon, risk aversion and expected income. 
Let $  J_T(t,q;\sigma)$ denote the value function of the optimization problem 
(\ref{mean-quadratic}) with time horizon $T$,  for any $  T_1>T_2>t$, we have
$J_{T_1}(t,q;\sigma)> J_{T_2}(t,q;\sigma)$.
This coincides with the actual situation that an investor's risk affordability directly relates to his/her time horizon. 
An investor with a two-week time horizon can be considered to have a greater ability to bear risk, other things being equal, than an investor with a two-hour horizon. 
This difference is because over two weeks there is more scope for losses to be recovered or other adjustments to circumstances to be made than there is over two hours. 

 \subsection{Effect of Multiple Venues}

 With the development of electronic exchanges, many new trading destinations have appeared to compete the trading capability of the fundamental financial markets such as the NASDAQ's Inet and NYSE in the US, or EURONEXT, the London Stock Exchange and Xetra in Europe. 
As a result, the same financial instrument can be traded simultaneously in different venues. 
These trading venues are generally different from each other at any time because of variation in the fees or rebates they demand to trade and the liquidity they offer. Therefore, to liquidate a large order, traders may need to split their order across various venues to reduce market impact. 
In this subsection, we illustrate theoretically  how multiple venues affect an investor's trading strategy.

Denote the liquidating strategy  corresponding to a single venue case as
$  \theta_t^{single,*}$.
Suppose  that there are $N$ distinct venues for an investor to submit his/her trades and that   condition (\ref{cond1}) is satisfied. Without loss of generality, we assume   that there is no difference among there  trading venues, namely $  \beta^1=\cdots=\beta^N=1/N$ and $  \eta^{tem, (1)}=\cdots=\eta^{tem, (N)}=\eta^{tem}$.
 The following two conclusions about the effects on optimal liquidating speed 
and transaction cost can be drawn.

\begin{description}
\item[(i)]  {\bf(Effect on optimal liquidating speed).} First, we observe that
$ \theta^{1,*}_t=\cdots=\theta^{N,*}_t:=\theta^*_t(N)$, where
$$
 \theta^*_t(N)= - \sqrt{\frac{\lambda\sigma^2 }{\eta^{tem}N}}\times \frac{\varsigma(N)e^{-2\sqrt{\lambda\sigma^2N/\eta^{tem}}(T-t)}-1}{\varsigma(N)e^{-2\sqrt{\lambda\sigma^2N/\eta^{tem}}(T-t)}+1}X_t:=\mathfrak{J}(t, N)X_t,
$$
with 
$$
\varsigma(N)=\frac{2\sqrt{\frac{\lambda\sigma^2N}{\eta^{tem}}}-(2K\frac{N}{\eta^{tem}}-b)}{2\sqrt{\frac{\lambda\sigma^2N}{\eta^{tem}}}+(2K\frac{N}{\eta^{tem}}-b)}.
$$
Notice that $  X_0=Q$ and  that 
$$
  \lim_{N\to\infty} \dot{X}_0=-\lim_{N\to\infty} N\theta^{*}_0(N)=-\lim_{N\to\infty} N\mathfrak{J}(0,N)Q=\infty.
$$ 
For any fixed time  $  t\in(0,T)$, we have 
$$
 \lim_{N\to\infty}\mathfrak{J}(t, N)=0
\quad {\rm and} \quad 
  \lim_{N\to\infty}\sqrt{N}\mathfrak{J}(t,N)=\sqrt{\frac{\lambda\sigma^2}{\eta^{tem}}}.
$$ 
Hence
$$
\begin{array}{lll}
  \displaystyle \lim_{N\to\infty} X_t&=&  \displaystyle\lim_{N\to\infty}  Q e^{-\int_0^tN\mathfrak{J}(u,N)du} 
= \displaystyle \lim_{N\to\infty} Q e^{-\sqrt{N}\times \int_0^t\sqrt{N} \mathfrak{J}(u,N)du}=   \displaystyle \lim_{N\to\infty} Q e^{-a_1 \sqrt{N}} =0,
\end{array}
$$
where\footnote{According to the structure of $\mathfrak{J}(t, N)$,  there exists a finite number $M>0$ such that
$$
|\sqrt{N}\mathfrak{J}(t, N)|\le M,\quad \mbox{for all $(t, N)\in[0,T)\times \mathbb Z_+$,}
$$
where $\mathbb Z_+$ is the set of all nonnegative integers. 
Directly applying the Dominated convergence theorem to $\{\sqrt{N}\mathfrak{J}(t, N)\}$, we  obtain the result in Eq. (\ref{pk}).   
}
\begin{equation}\label{pk}
  0<a_1=\lim_{N\to\infty} \int_0^t\sqrt{N}\mathfrak{J}(u, N)du=\sqrt{\frac{\lambda\sigma^2}{\eta^{tem}}} t<\infty.
\end{equation}
We also have
$$
\begin{array}{lll}
 \displaystyle \lim_{N\to\infty} -\dot{X}_t &=&  \displaystyle \lim_{N\to\infty}N\theta^{*}_t(N)\\
&=&  \displaystyle \lim_{N\to\infty}QN\mathfrak{J}(t,N)\times e^{-\int_0^tN\mathfrak{J}(u,N) du }\\
&=&  \displaystyle \lim_{N\to\infty} Q \frac{\sqrt{N}\times\sqrt{N}\mathfrak{J}(t,N)}{e^{\sqrt{N}\times\int_0^t\sqrt{N}\mathfrak{J}(u, N)du}}\\
&=&   \displaystyle \lim_{N\to\infty} Q\sqrt{\frac{\lambda\sigma^2}{\eta^{tem}}}\times   \frac{\sqrt{N}}{e^{a_1\sqrt{N}}}\\
&=& \displaystyle \lim_{N\to\infty} Q\sqrt{\frac{\lambda\sigma^2}{\eta^{tem}}}\times   \frac{1}{a_1e^{a_1\sqrt{N}}}\\
&=&0.
\end{array}
$$
That is to say,  as $N$ approaches infinity, the investor would immediately close his/her position at the beginning of the trading horizon.

\item[(ii)]{\bf(Effect on transaction cost).}
The two strategies: 1) $  \{\theta_t^{single,*}\}_{t\in[0,T]}$, liquidating in a single venue;  and 2)
$$
  \Big\{\theta^{(1)}_t=\theta^{(2)}_t=\cdots=\theta^{(n)} _t= \frac{1}{N}\theta_t^{single,*}\Big\}_{t\in[0,T]},
$$
equally splitting the original target among $N$ venues, 
transmit the same information to the market (i.e., have the same permanent impact),
but involve different transaction costs (i.e., have different temporary price impacts)
$$
\begin{array}{lll}
  \displaystyle\underbrace{   \int_0^{T-}\sum_{n=1}^N\left(S_t^I-\widetilde S_t^{I,(n)}\right)\theta^{(n)} _tdt}_{\mbox{costs in $N$-venue}} &=&  \displaystyle\int_0^{T-}
\left(\sum_{n=1}^N\frac{\eta^{tem}}{N^2}\right)(\theta_t^{single,*})^2dt\\
&=&  \displaystyle\int_0^{T-} \frac{\eta^{tem}(\theta_t^{single,*})^2}{N}dt\\
&\le&  \displaystyle \int_0^{T-}\eta^{tem}(\theta_t^{single,*})^2dt=\underbrace{  \int_0^{T-}\left(S_t^I-\widetilde S_t^I\right)\theta_t^{single,*}dt.}_{\mbox{costs in a single venue}}\\
\end{array}
$$
That is, liquidating schedule across multiple venues can indeed help to reduce  transaction costs arising from market liquidity.
\end{description}

\subsection{Numerical Results}

In this section, we provide some numerical  results to illustrate the effects of different market factors on investor's liquidation strategy.
We assume that there are $N$ distinct trading venues for an investor to submit his/her trades, and that there is no difference among these venues.
As far as our simulation is concerned, we choose the  following hypothetical values for the model parameters:
$$
   \sigma=\exp(1), \ \ T=1, \ \ K=0.1,  \ \ \eta^{per}=0.005, \ \ \eta^{tem}=0.01.
$$
The risk-aversion parameter $\lambda$ ranges across all nonnegative values,
as the actual choice of trajectory will be determined by the trader's risk preference.

\subsubsection{Effect of Multiple Venues}

When multiple trading venues are available, dividing a target quantity across these venues may help an investor to hide his liquidation purpose and hence reduce the permanent market impact.  
To facilitate our analysis, we assume that  these trading venues are identical for the investor.
Some numerical results are presented in Table 1.

\begin{table}[H]
\caption{1, 000 Simulations  with $  Q=100, s=15, \lambda=0.1$.}
\begin{center}
\begin{tabular}{cccccc}
\hline
\cline{1-6}
& &Std&&Std&   \\
\#\{Venues\} &G/L  & (G/L$)^*$&$  q_T$&$  (q_T)^{**}$&Value function\\
\hline
 1&-461.80&63.38&1.43&2.22& -902.90\\
 2&-324.86&54.19&0.03&0.04&-638.30\\
 3&-265.08&46.12&0 &0 &-522.83\\
4& -230.51&  43.03&  0 &  0 & -455.06\\
 10& -147.22& 33.51&0&0   &-293.05\\
 \vdots&\\
 50&-72.92&18.57&0&0&-144.59\\
\hline
\cline{1-6}
{\scriptsize*: in $10^{-2}$}\\
{\scriptsize**: in $10^{-16}$}\\
\end{tabular}
\end{center}
\label{table1}
\end{table}

Table 1 provides a comparison of optimal  liquidation  strategy among  investors facing
multiple venues.
The first column shows the number of venues available for an investor to submit 
his/her trades.
The second and third columns show, respectively,  the mean and standard derivation  of 
G/L (Eq. (\ref{GL})). 
The negative G/L implies  that the investor is trading    at a loss. The absolute value represents the {\it cost of trading}. As defined so far, the {\it cost of trading}, relative to the arrival price benchmark, is the difference between the total dollars received to liquidate Q shares and the initial market value:
\begin{equation}\label{GL}
  {  G/L}= \int_0^T(\widetilde S_t^{I,(1)}\theta^{(1)}_t
+\cdots+\widetilde S_t^{I,(N)}\theta^{(N)} _t)dt+X_{T-}(S_{T-}^I-\mathcal C^o(X_{T-}))-QS_0\quad(<0).
\end{equation}
It  is an indicator of  market liquidity.   
The greater the loss, the worse the situation. 
The last column shows the corresponding value taken by the  value function:
$$
  {  G/L}-\lambda  \mathbb E\left[\int_0^T \sigma_t^sX_t^2dt \right] \quad(<0),
$$
an indicator of the investor's satisfaction. 
Investors are risk-averse ($\lambda>0$), so usually they do not hold outstanding stock 
uncleared at the end of trading $  t=T-$. 
The fourth and fifth columns show, respectively, the mean and standard deviation of the outstanding position at the end of trading.
The effect of  multiple venues on investors' liquidation strategies  is fairly straightforward.
With the increase of  trading  venue,  investors'  loss of  trading  
 decreases. 
Meanwhile, as the  number of  trading  venue increases, the risks involved decreases, which in turn enhances investors' satisfaction (measured by the value function).

\subsubsection{Trading Curve}

The average number of shares at each point of time, say the trading curves,  with
respect to different number of multiple venues are depicted  in Figure \ref{fig2}.
\begin{figure}[H]
\begin{center}
\includegraphics[width=5.in, height=4.5in]{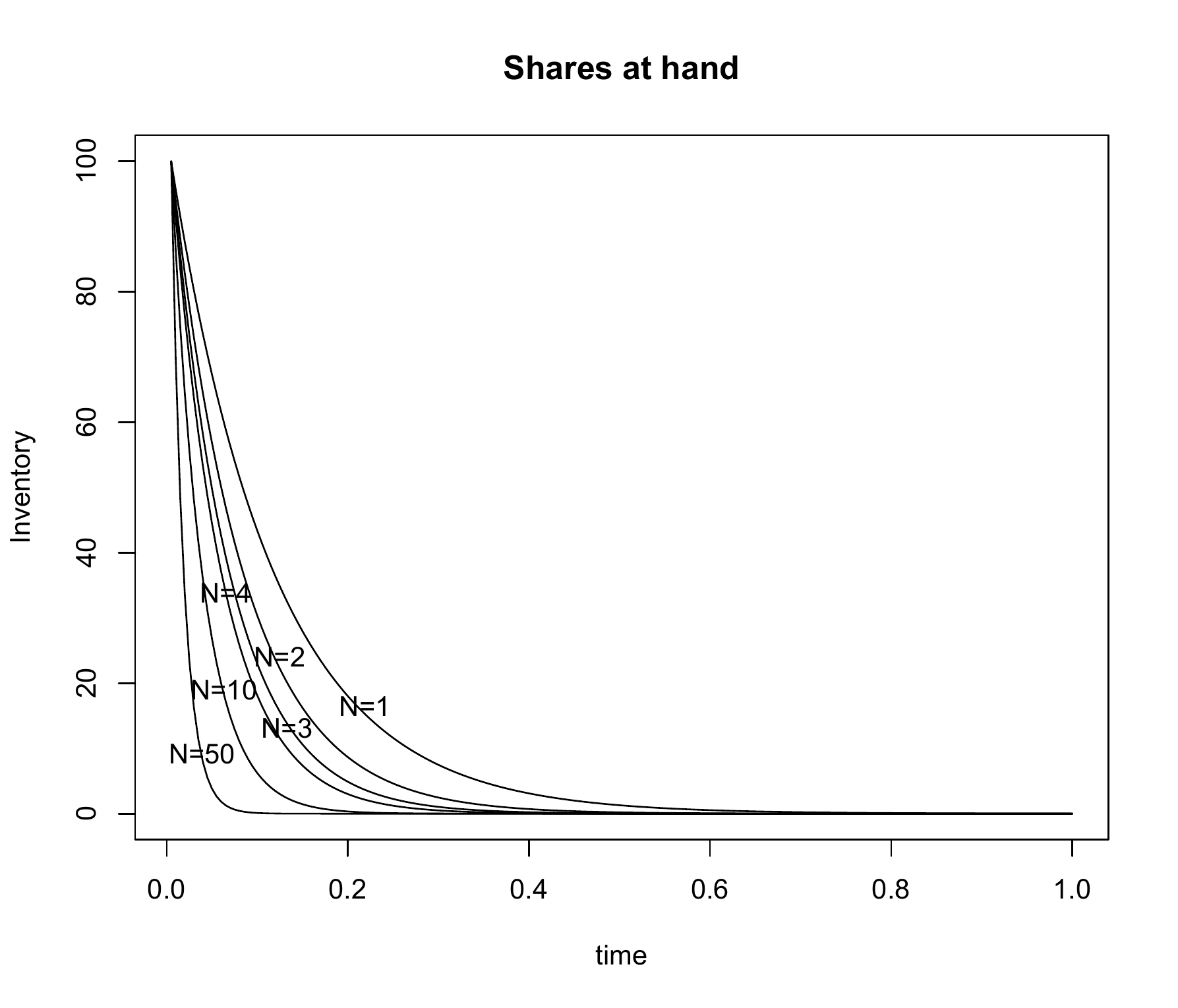}
\caption{  Trading curves with $  Q=100$.  }
\label{fig2}
\end{center}
\end{figure}
It is clear that as the number of alternative choices increases,  the average liquidation speed increases.
Indeed,  a trader needs to trade fast to reduce position risk,
when the quantity to liquidate is large within  the remaining time.
However,  an immediate execution is often not possible or at a very high cost due to insufficient liquidity.   The trading cost  reflects the difference between the amount at which the trader expects to sell and the  sales proceeds the trader actually receives.
It is an indicator of the illiquidity of a market.
We can see from the results  that, when    more trading venues spring up in the financial market,
the liquidity of the asset  enhances.
This provides good supplies of liquidity to the trader, and that  the trader with more choices of venues to submit his/her trades may be willing to close his/her position earlier.
These observations are consistent with financial intuition.

%
%
%
%
%
%

\subsubsection{Efficient Frontier}

The efficient frontier consists of all optimal trading strategies. Here the ``optimal'' refers to the situation where no strategy has a smaller variance for the same or higher level of expected transaction profits. For each level of risk-aversion,  $\lambda$,  there is an optimal liquidating strategy.
By running 1,000 simulations with initial inventory $  Q=100$, we obtain an efficient frontier
(see  Figure \ref{fig3}).
\begin{figure}[H]
\begin{center}
\includegraphics[width=5.in, height=4.5in]{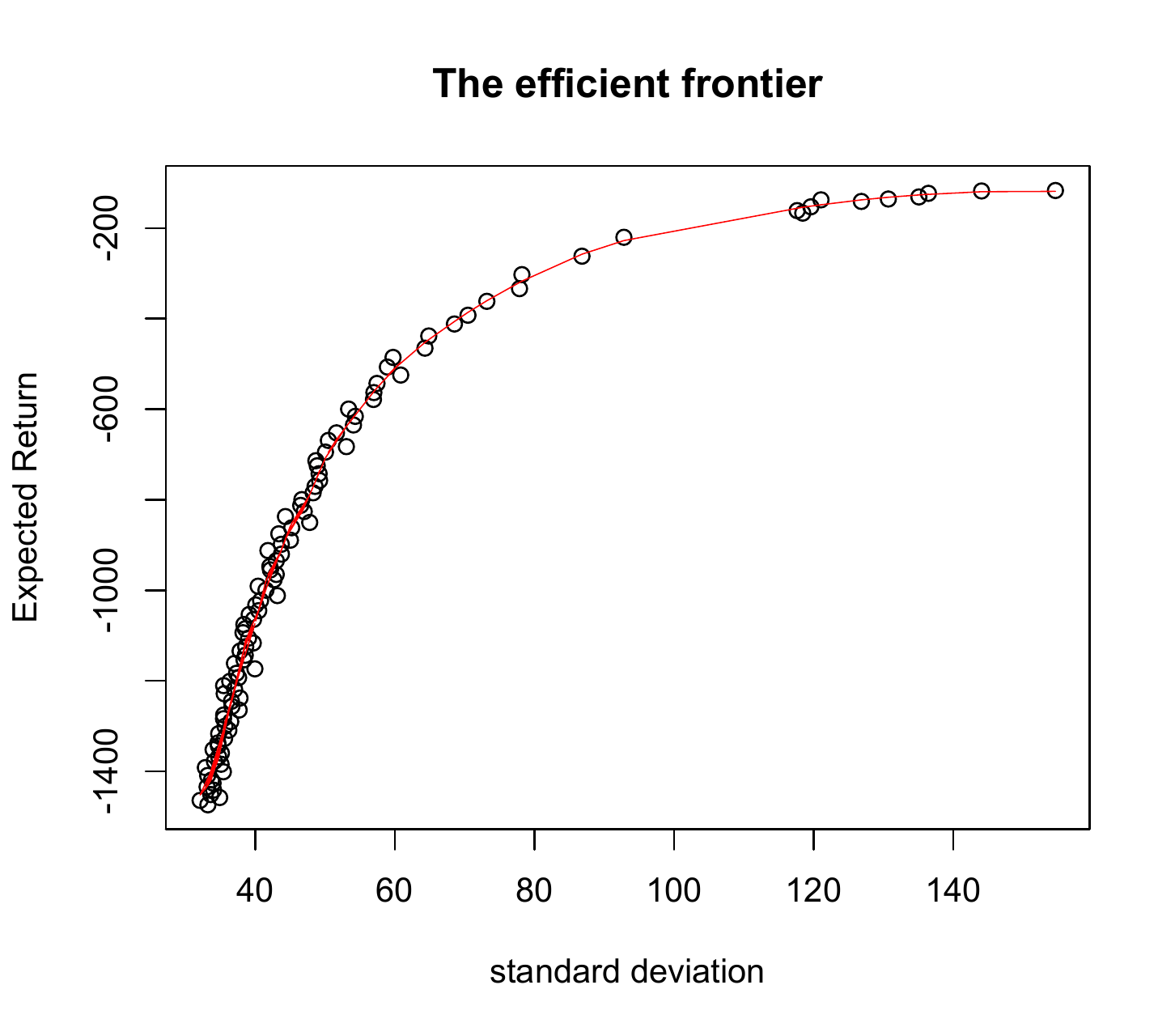}
\caption{The efficient frontier.  }
\label{fig3}
\end{center}
\end{figure}

Each point on the frontier represents a distinct strategy for an investor with certain level of risk-aversion.
It  shows the tradeoff between the expected revenues and the standard deviation.
As we can see from this figure, the frontier increases along an approximate smooth concave curve.
The slope of the ``tangent line'' indicates the trader's risk aversion level $\lambda$.

The point on the most right is  obtained for a risk-neutral trader  with $\lambda=0$. 
We define this point as $  (Std_0, R_0)$.
For any other point on the left, we have
$$
  R_t-R_0\approx\frac{1}{2}(Std_t-Std_0)^2\frac{d^2R}{d   Std^2} \Big|_{Std=Std_0}.
$$
A crucial insight is that for a risk neutral trader, a first-order decrease in the expected revenue can  approximately incur a second-order  decrease in the standard deviation.
The efficient frontier depicted in Figure 2 is consistent with the one for the mean-variance portfolio selection.

\subsubsection{Effect of Illiquidity}

The cost of trading has three main components, brokerage commission, bid-ask spread 
(we will discuss this factor later) and price impact, of which liquidity affects the latter two.
Brokerage commission is usually negotiable and does not constitute a large fraction of the total cost of trading except in small-size trades, which is a topic beyond the scope of this paper.
Stocks with low liquidity can have wide bid-ask spreads.
The bid-ask spread, which is the difference between the buying price and the selling price, is incurred as a cost of trading a security.
The larger the bid-ask spread, the higher is the cost of trading.

Liquidity also has implications for the price impact  of trade.
The extent of the price impact depends on the liquidity of the stock.
A stock that trades millions of shares a day may be less affected than a stock that trades only a few hundred thousand shares a day.
The effect of temporary market impact $  \eta^{tem}$, a key indicator of the illiquidity of
a market, is shown in Figure 3.

It is clear  that traders in a market with larger temporary impact will involve higher  transaction costs; and that the cost  distribution behaves like a normal distribution under
the assumption of constant volatility.
We refer interested readers to the analysis of the effects of market impact, both permanently and temporarily, to the work by Almgren \cite{Almgren3}.

\begin{figure}[H]
\begin{center}
\includegraphics[width=5.in, height=3.4in]{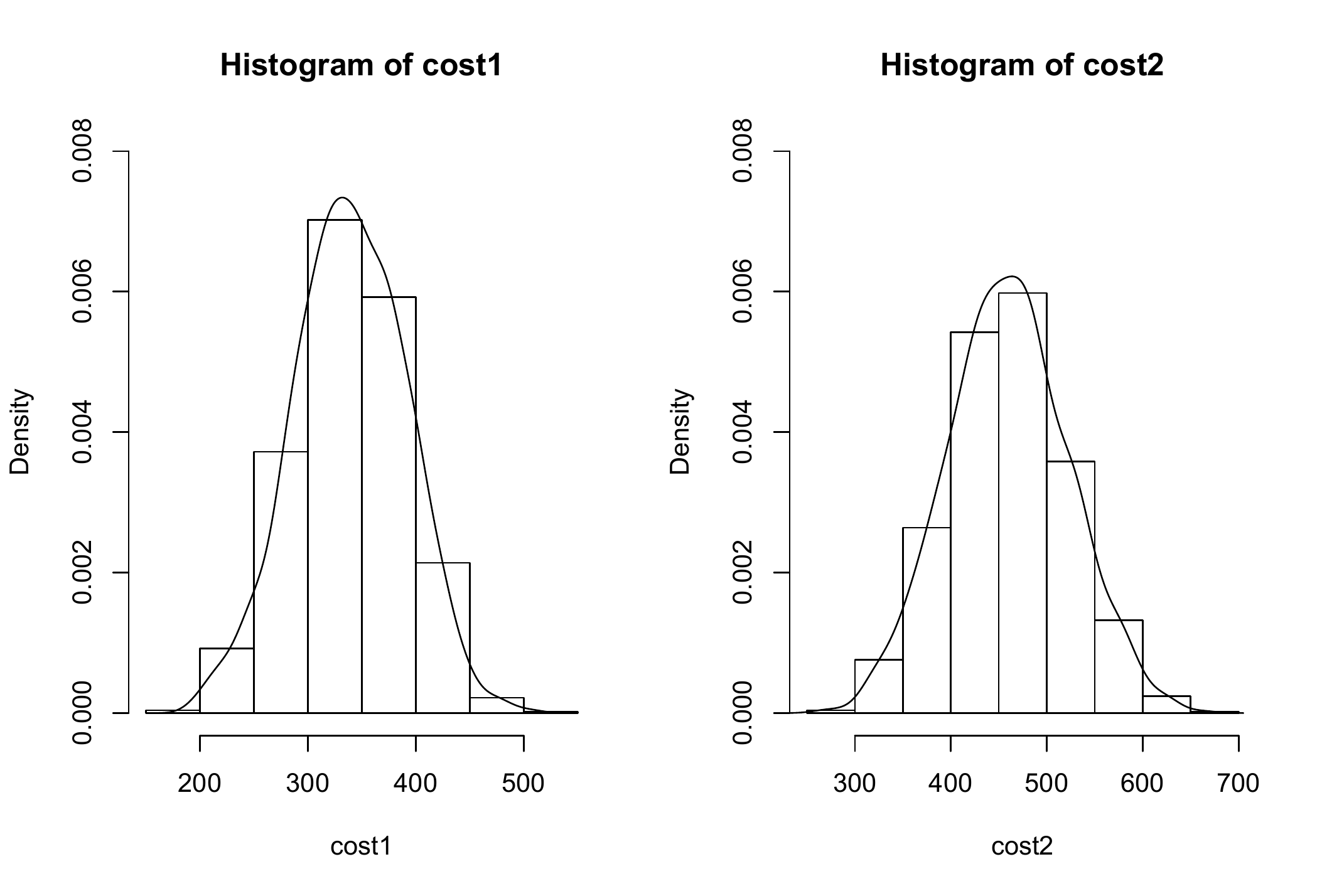}
\label{fig4}
\end{center}
\end{figure}

\begin{figure}[H]
\begin{center}
\includegraphics[width=5.in, height=3.4 in]{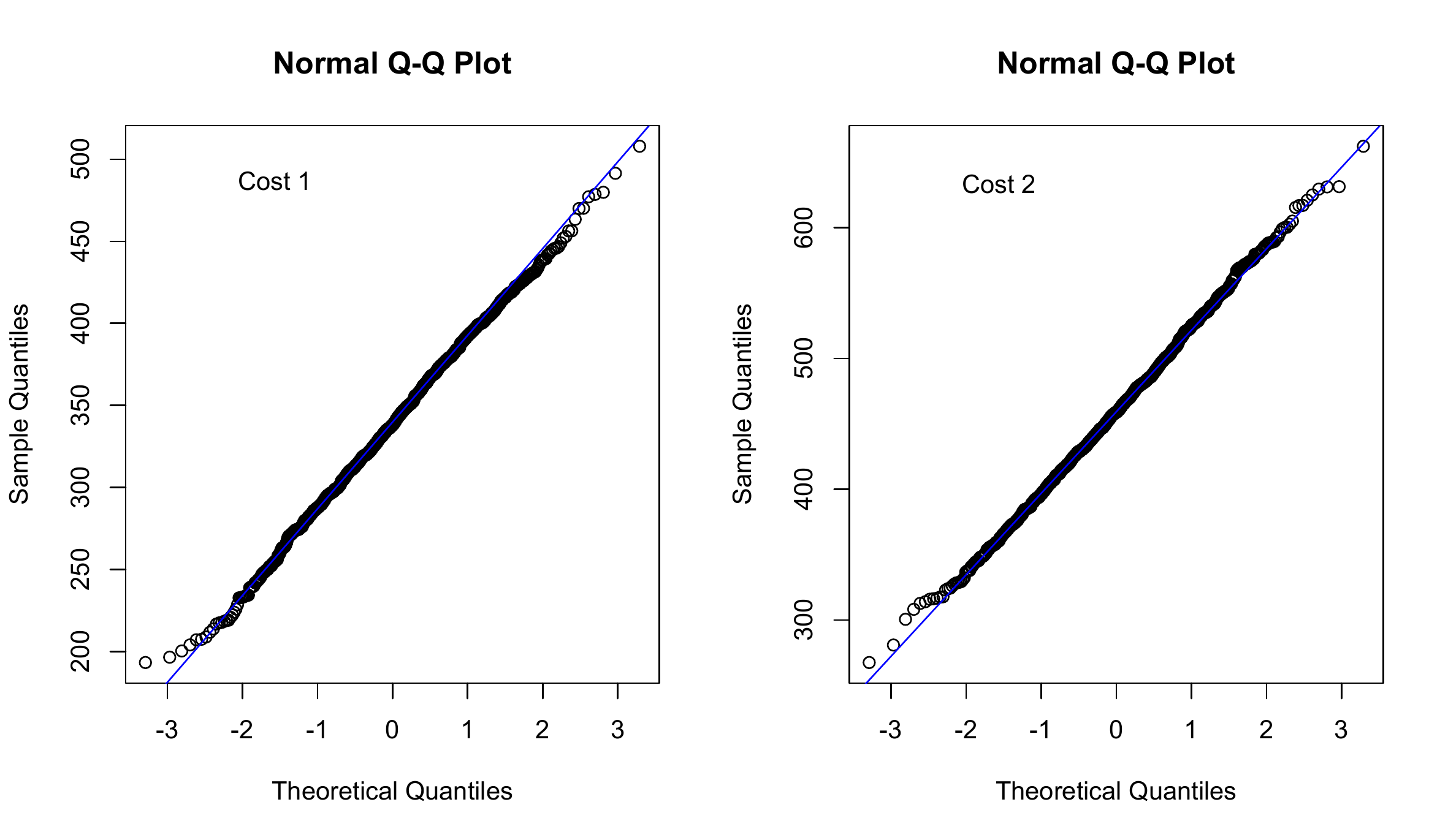}
\caption{    {\bf (Left-hand Side):} $  \eta^{tem}=0.005$;
\quad \quad  {\bf(Right-hand Side):} $  \eta^{tem}=0.01$.  }
\label{fig40}
\end{center}
\end{figure}

\section{Stochastic Volatility Model}

The constant volatility assumption might be reasonable for large-cap US stocks.
However, this assumption becomes defective when we move to the analysis of
small and medium-capitalization stocks, whose volatilities vary randomly through the day.
In this section, we relax the constant volatility assumption.
Thus the value function $J=J(t, s, \sigma, q)$
is a function satisfies HJB-1. 
Similarly, we  consider  relaxing the constraints associated with HJB-1 and solve  the unconstrained optimization problem in  HJB-$1^\prime$. 
We then verify that the obtained optimal control does satisfy all the constraints in HJB-1. 

We present a framework  that is general enough to take into account
the effect of stochastic volatility in the following.

\subsection{Slow Mean-reverting Volatility Model}

We analyze models in which stock prices are conditionally normal\footnote{As we have discussed before, it is reasonable when considering short-term liquidating problem.},
and the volatility process is a positive and increasing function of a mean-reverting 
Ornstein-Uhlenbeck (OU) process. That is,
\begin{equation}\label{stochastic}
\left\{
\begin{array}{lll}
  \displaystyle dS_t=\phi(\nu_t)dW_t\\
  \displaystyle d\nu_t=\epsilon (m -\nu_t)dt + \xi \sqrt{ \epsilon}\; dB_t\\
  B_t=\rho W_t+\sqrt{1-\rho^2}Z_t\\
\end{array}
\right.
\end{equation}
where   $  \{W_t\}$ and $  \{Z_t\}$ are two independent one-dimensional Brownian motions,  and $\rho$ is the correlation between price and volatility shocks, with $ |\rho|<1$.
The parameter $m$ is the equilibrium or long-term mean,
$\epsilon (>0)$ is the intrinsic time-scale of the process, and  $\xi$,  combining with the scalar $\sqrt{\epsilon}$,  represents the degree of volatility around the equilibrium $m$ caused by shocks.
Here $  \{\nu_t\}$ is a simple building-block for a large class of stochastic  volatility models described by choice of $\phi(\cdot)$.
According to  It\^{o}'s lemma,
 $$
   d\phi(\nu_t)=\epsilon\left[\phi^\prime(\nu_t)(m-\nu_t)+\xi^2\phi^{\prime\prime}(\nu_t)\right]dt+\xi \phi^\prime(\nu_t)\sqrt{\epsilon}dB_t.
 $$
We call these models mean-reverting because the volatility is a monotonic function of
$  \{\nu_t\}$ whose drift pulls it towards the mean value $m$.
The volatility is correspondingly pulled towards {\it approximately} $  \phi(m)$.\\

Plenty of analysis on specific It\^{o} models by numerical and analytical methods 
can be found in the literature \cite{RA2, Fouque00}.
Our goal here is to identify and capture the relevant features of liquidating strategies
for small- and medium-capitalization stocks.
Our framework in Eq. (\ref{stochastic}) is adequate and efficient enough to
capture this.
Here we will work in the regime $\epsilon\ll 1$ (slow mean-reverting).
For simplicity, we consider the case of $  N=1$.
The value function $J$ under this setting solves the following PDE:
 \begin{equation}\label{PDE2}
 \left\{
 \begin{array}{lll}
  \displaystyle\left( \partial_t  +\mathcal L  \right)J   -\lambda \phi^2(\nu)q^2
   +\frac{1}{4\eta^{tem}}(\partial_q J+\eta^{per} q)^2=0\\
   J(T-,s, \nu, q)=-Kq^2.
\end{array}
\right.
\end{equation}
Furthermore, we have  $  J(t, s, \nu,q)=J(t, \nu,q)$ independent of $s$,
since the terminal data does not depend on $s$ and PDE (\ref{PDE2})
introduces no $s$-dependence.
Hence, the operator $\mathcal L$ takes the form of
$$
  \mathcal L=\epsilon\left((m-\nu)\partial_\nu+\frac{1}{2}\xi^2 \partial_{\nu\nu}\right).
$$

\noindent 
Similarly as before, we look for a solution quadratic in the inventory variable $q$
\begin{equation}\label{rt}
  J(t,\nu,q)=f(t,\nu)+g(t,\nu) q+h(t,\nu ) q^2.
\end{equation}
The  optimal  liquidating schedule can then   be calculated through the relation
 \begin{equation}\label{rep1}
   \theta^{*}(t,\nu,q)=-\frac{1}{2\eta^{tem}}\big[\big(2h(t,\nu)+ \eta^{per}\big)q+g(t,\nu)\big].
  \end{equation}
To solve Eq. (\ref{rt}),  $f, g$ and $h$  must satisfy the following PDEs:
\begin{equation}\label{poly}
 \begin{array}{lll}
 \left\{
 \begin{array}{l}
  \displaystyle (\partial_t+\mathcal L) h -\lambda\phi^2(\nu)+ \frac{1}{4\eta^{tem}}(2h+\eta^{per})^2=0\\
   \displaystyle h(T-, \nu)=-K
 \end{array}
 \right.\\
 \\
 \left\{
 \begin{array}{l}
  \displaystyle (\partial_t+\mathcal L) g+ \frac{1}{2\eta^{tem}}(2h+\eta^{per})g=0\\
   \displaystyle g(T-,\nu)=0\\
 \end{array}
 \right.\\
 \\
 \left\{
 \begin{array}{l}
   \displaystyle (\partial_t+\mathcal L) f+ \frac{1}{4\eta^{tem}}g^2=0\\
   \displaystyle f(T-, \nu)=0.
 \end{array}
 \right.
 \end{array}
\end{equation}
\quad\\

It is straightforward to verify  that  (Feynman-Kac formula)
$$
 g(t,\nu)\equiv 0\quad \mbox{and} \quad f(t,\nu)\equiv 0.
$$
If we set 
$$ 
 Y_t=\frac{1}{2\eta^{tem}}\int_0^t(2h(u, \nu_u)+\eta^{per})du\quad{\rm and}\quad 
   \mathfrak{h}(t,\nu_t, Y_t)= e^{Y_t} (2h(t,\nu_t)+\eta^{per}).
$$
According to  It\^{o}'s lemma, 
$$
\begin{array}{lll}
  d\mathfrak{h}(t,\nu_t, Y_t)
&=&  \displaystyle e^{Y_t} \Big\{\big[ 2(\partial_t+\mathcal L)h+\frac{1}{2\eta^{tem}}(2h+\eta^{per})^2\big]dt +2\xi\sqrt{\epsilon}\partial_\nu h\; dB_t\Big\}\\
&=&  \displaystyle  e^{Y_t} \left\{2\lambda \phi^2(\nu_t) dt +2\xi\sqrt{\epsilon}\partial_\nu h\; dB_t\right\}.
\end{array}
$$
Applying It\^{o}'s formula to $\mathfrak{h}(u,\nu_u, Y_u)$ between $t$ and $T-$, with $h$ satisfying PDE (\ref{poly}), we get:   
$$
 \mathfrak{h}(t,\nu_t, Y_t)= e^{Y_{T-}}(-2K+\eta^{per})-2\lambda\int_t^Te^{Y_u}\phi^2(\nu_u)du-2\int_t^T\xi\sqrt{\epsilon} \partial_\nu h(u,\nu_u)dB_u.
$$
Hence, 
\begin{equation}\label{sol}
\mathfrak{h}(t,\nu, y)=\mathbb E\left[ e^{Y_{T-}}(-2K+\eta^{per})-2\lambda\int_t^Te^{Y_u}\phi^2(\nu_u)du\Big|\nu_t=\nu, Y_t=y\right].
\end{equation}
If  condition (\ref{cond1}) is satisfied, then $  -2K+\eta^{per}<0$, and hence  $  \mathfrak{h}(t,\nu,y)\le 0$ for any $  (t,\nu,y)\in[0,T)\times \mathbb R\times \mathbb R$.  
Therefore, $  2h(t,\nu)+\eta^{per}\le 0$ for any $  (t,\nu)\in[0,T)\times \mathbb R$. 
Then the control policy in  Eq. (\ref{rep1})    is  optimal. \\

Eq. (\ref{sol}) just  provides  an implicit  analytical  solution to  $\mathfrak{h}$.
However, for the general coefficients $(\phi, m, \epsilon, \xi)$, we do not have an explicit solution.
The small-$\epsilon$ regime gives rise to a regular perturbation expansion in 
the powers of $\epsilon$.
As in \cite{Fouque00}, we seek an asymptotic approximation for $h$:
\begin{equation}\label{ghj}
  h(t,\nu)=h^{(0)}(t,\nu)+\epsilon h^{(1)}(t,\nu)+\epsilon^2 h^{(2)}(t,\nu)+\cdots+.
\end{equation}
This is a series expansion, for which we find $  h^{(i)}$ for $  i=0,1$ explicitly,
and study the accuracy when using the truncated series in Section 4.3.2.  
In the next sections, we present our asymptotic approach for $h$ and the
accuracy of our approximate optimal liquidation policy.

\subsection{Asymptotics}

We first  construct a regular perturbation expansion in the powers of $\epsilon$ by writing
$$
 \mathcal L=\epsilon \mathcal L_0, \quad h(t,\nu)=\sum_{i\ge 0}\epsilon^i h^{(i)}(t,\nu)
$$
where
$$
  \mathcal L_0=(m-\nu)\partial_\nu+\frac{1}{2}\xi^2 \partial_{\nu\nu}.
$$
Substituting these expansions into Eq. (\ref{poly}) and grouping the terms of 
the powers of $\epsilon$, we find that the lowest order equations of the regular perturbation expansion are
   \begin{equation}\label{like}
   \begin{array}{lll}
   \mathcal O(1):&&   \displaystyle \quad\quad\quad\quad0= \partial_t h^{(0)}-\lambda\phi^2(\nu)+\frac{1}{4\eta^{tem}}(2h^{(0)}+\eta^{per})^2 \\
   \mathcal O(\epsilon):&&   \displaystyle \quad\quad\quad\quad0= \partial_t h^{(1)}+ \mathcal L_0 h^{(0)}+\frac{1}{\eta^{tem}} (2h^{(0)}+\eta^{per})h^{(1)}
   \end{array}
   \end{equation}
   with  terminal data
   $$
     h^{(0)}(T-, \nu)=-K\quad \mbox{and}\quad h^{(1)}(T-, \nu)=0.
   $$
The solutions of Eq. (\ref{like}) can be obtained easily.
Indeed, we have
\begin{eqnarray}\label{lm}
\left\{
\begin{array}{lll}
   \displaystyle  h^{(0)}(t,\nu)= \sqrt{ \eta^{tem}\lambda \phi^2(\nu)} \cdot \frac{\varsigma(\nu)e^{-2\sqrt{\lambda \phi^2(\nu)/\eta^{tem}}(T-t)}-1}{\varsigma(\nu) e^{-2\sqrt{\lambda \phi^2(\nu)/\eta^{tem}}(T-t)}+1}-\frac{1}{2}\eta^{per} \\
  \displaystyle h^{(1)}(t,\nu)=\int_t^T  D(r;t) \mathcal L_0h^{(0)}(r,\nu)  dr \\
\end{array}
\right.
\end{eqnarray}
where
$$
 \varsigma(\nu)=\frac{\sqrt{\eta^{tem}\lambda\phi^2(\nu)}-(K-\eta^{per}/2)}{ \sqrt{\eta^{tem}\lambda\phi^2(\nu)}+(K-\eta^{per}/2)}
\quad  {\rm  and} \quad
D(r;t)=\exp\left(\frac{1}{\eta^{tem}}\int_t^r(2h^{(0)}+\eta^{per})d\tau\right).
$$

\subsection{Optimal Strategy}

We now analyze and interpret how the principle expansion terms for the value function can be used in the expression for the optimal liquidation strategy
$\theta^*$, which leads to an approximate feedback policy of the form:
$$
  \theta^*(t,\nu,q)=\theta^{(0),*}(t,\nu,q)+\epsilon \;\theta^{(1),*}(t,\nu,q)+\cdots+.
$$

\subsubsection{Moving-Constant-Volatility  Strategy}

First, we introduce  the zeroth order terms  $  h^{(0)}$
in the expansion for $h$.
This gives the zeroth order optimal liquidation strategy
$$
  \theta^{(0),*}(t,\nu,q):=-\frac{1}{2\eta^{tem}}\left(2h^{(0)}(t,\nu) +\eta^{per}\right)q.
$$
Recall that, in the case of constant volatility approach (Section 3.1),
\begin{equation}\label{cons}
  \theta^{Cons}(t,q;\sigma)=-\frac{1}{2\eta^{tem}}\left(2h^{(0)}(t,\nu^{Cons})) +\eta^{per}\right)q,
\end{equation}
 where $  \nu^{Cons}=\phi^{-1}(\sigma)$ is a constant  over $  [0,T)$. This approximation  might be    reasonable    for most   large-cap US stocks,  for which, over short-term time frames like one day or less,  the volatility can be regard as a constant. However,  this might  not be a reasonable approximation for  assets that are less heavily traded than large-cap US stocks.

The  naive constant-approach liquidation strategy for small- and medium-capitalization stocks  would adopt the strategy for large-cap US stocks, $  \theta^{Cons}$, but with the coefficient $\sigma$ driven by a varying factor $\nu$, i.e.,
\begin{equation}\label{moving}
  \theta^{Cons}(t, q; \phi(\nu))=\theta^{(0),*}(t,\nu,q).
\end{equation}
We call this strategy  the {\it moving constant-approach liquidation strategy}.
This strategy is a successful one if the parameters remains constant when they are estimated from updated segments of historical data.
However, this is not always the case as the risk environment is dynamic.

\subsubsection{First-order Correlation to Optimal Policy}

The approximation to the optimal strategy can be more accurate by going 
into the higher order terms.
Substituting the expansion of $h$ up to terms in $\epsilon$ gives
$$
  \theta^*(t,\nu,q)=\widetilde\theta^*(t,\nu,q)+\mbox{higher order terms}
$$
where
\begin{equation}\label{po23}
  \widetilde\theta^*(t,\nu,q)=-\frac{1}{2\eta^{tem}}\left(2h^{(0)}(t,\nu) +\eta^{per}\right)q-  \frac{\epsilon}{\eta^{tem}}h^{(1)}(t,\nu)q,
\end{equation}
and $  h^{(0)}$ and $  h^{(1)}$ are given by Eq (\ref{lm}).

In Eq.(\ref{po}), the term with  $\epsilon$ corresponds to traders' response to 
the risk arising from  stochastic volatility, which can be regarded as the principle term
hedging the risk factor.
In the previous sections, we presented a ``pure'' stochastic volatility approach to the liquidating problem, in which volatility $  \sigma_t$ is modeled as an It\^{o} process driven by a Brownian motion that has a component independent of the Brownian motion driving the asset price.
In comparison with Almgren's work in \cite{RA2}, we mainly focus on a special class of   volatility model, time-scale volatility model, and worked in the regime $\epsilon\ll 1$.  The separation of time-scales with respect to this kind of models provided an efficient way to identify and track the effect of randomly varying volatility.
The theorem below justifies the accuracy of this approximation.

\begin{theorem}
For fixed $  t<T$, $\nu$ and $  q$, the optimal liquidating strategy $  \theta^*(t,\nu,q)$ defined in Eq. (\ref{rep1}), and our asymptotic approximation
$  \widetilde \theta^*(t,\nu,q)$ defined in Eq. (\ref{po23}) satisfy
$$
  | \theta^*(t,\nu,q) -\widetilde \theta^*(t,\nu,q) |=\mathcal O(\epsilon^2).
$$
\end{theorem}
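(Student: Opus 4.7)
The plan is to reduce the theorem to a pointwise estimate on the residual
$$R^{\epsilon}(t,\nu) := h(t,\nu) - h^{(0)}(t,\nu) - \epsilon\, h^{(1)}(t,\nu),$$
since $g\equiv 0$ implies, from Eq.~(\ref{rep1}) and Eq.~(\ref{po23}), that
$\theta^{*}(t,\nu,q)-\widetilde\theta^{*}(t,\nu,q) = -q\,R^{\epsilon}(t,\nu)/\eta^{tem}$.
Hence the statement is equivalent to showing $R^{\epsilon}(t,\nu)=\mathcal O(\epsilon^{2})$ for fixed interior $(t,\nu)$.

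First I would substitute $h=h^{(0)}+\epsilon h^{(1)}+R^{\epsilon}$ into the nonlinear PDE for $h$ in Eq.~(\ref{poly}), written with $\mathcal L=\epsilon\mathcal L_{0}$, and expand $(2h+\eta^{per})^{2}$ around $2h^{(0)}+\eta^{per}$. The $\mathcal O(1)$ and $\mathcal O(\epsilon)$ equations of Eq.~(\ref{like}) then cancel the leading contributions, leaving the semilinear backward problem
$$\partial_{t}R^{\epsilon}+\epsilon\mathcal L_{0}R^{\epsilon}+\tfrac{1}{\eta^{tem}}\bigl(2h^{(0)}+\eta^{per}+2\epsilon h^{(1)}+R^{\epsilon}\bigr)R^{\epsilon} = -\epsilon^{2}\Bigl[\mathcal L_{0}h^{(1)}+\tfrac{(h^{(1)})^{2}}{\eta^{tem}}\Bigr],$$
with $R^{\epsilon}(T-,\nu)=0$. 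All remainder terms have been packaged into an $\mathcal O(\epsilon^{2})$ source on the right and a zero-order coefficient on the left.

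Next I would linearize. Writing $Z^{\epsilon}:=R^{\epsilon}/\epsilon^{2}$ converts the above into a linear backward parabolic equation with bounded zero-order potential $k^{\epsilon}(t,\nu)=\tfrac{1}{\eta^{tem}}(2h^{(0)}+\eta^{per}+2\epsilon h^{(1)}+\epsilon^{2}Z^{\epsilon})$, zero terminal data, and a bounded source $-[\mathcal L_{0}h^{(1)}+(h^{(1)})^{2}/\eta^{tem}]$. Applying the Feynman--Kac representation along the slowly varying Ornstein--Uhlenbeck process $\nu$ and a Gronwall-type estimate yields $|Z^{\epsilon}(t,\nu)|\le C$ uniformly in small $\epsilon$, that is $|R^{\epsilon}(t,\nu)|\le C\epsilon^{2}$. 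Multiplying by $|q|/\eta^{tem}$ closes the proof.

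The main obstacle is verifying the uniform boundedness of $h^{(0)}$, $h^{(1)}$ and the derivatives of $h^{(1)}$ that enter $\mathcal L_{0}h^{(1)}$, together with the sign of $2h^{(0)}+\eta^{per}$. The closed form in Eq.~(\ref{lm}) shows $h^{(0)}$ is smooth and bounded once $\phi(\nu)$ is smooth and bounded away from $0$ and $\infty$; the structural condition (\ref{cond1}), which still guarantees $2h^{(0)}+\eta^{per}\le 0$ in the stochastic-volatility regime, keeps the exponential factor $D(r;t)$ in the integral representation of $h^{(1)}$ under control as $r\to T-$, so one may differentiate under the integral sign to obtain uniform bounds on $\partial_{\nu}h^{(1)}$ and $\partial_{\nu\nu}h^{(1)}$. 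A short bootstrap absorbs the nonlinear contribution $(R^{\epsilon})^{2}/\eta^{tem}$: a first pass gives $|R^{\epsilon}|=\mathcal O(\epsilon)$, which plugged back into the semilinear equation upgrades the bound to the desired $\mathcal O(\epsilon^{2})$.
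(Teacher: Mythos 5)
Your proposal follows essentially the same route as the paper: reduce the theorem to bounding the residual $R^{\epsilon}=h-(h^{(0)}+\epsilon h^{(1)})$, derive its semilinear backward PDE with an $\mathcal O(\epsilon^{2})$ source (your zero-order coefficient $2h^{(0)}+\eta^{per}+2\epsilon h^{(1)}+R^{\epsilon}$ is exactly the paper's $h+h^{(0)}+\epsilon h^{(1)}+\eta^{per}$), and conclude via a Feynman--Kac representation. The only difference is that you spell out the bootstrap and the uniform bounds on $h^{(0)}$, $h^{(1)}$ and $\mathcal L_{0}h^{(1)}$, which the paper dismisses as "direct computation"; this is a welcome tightening rather than a new argument.
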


\begin{proof}
We note that, at any time $  t<T$, according to Eqs. (\ref{rep1}), (\ref{ghj}) and (\ref{po23})
$$
  |\theta^*(t,\nu,q)-\widetilde\theta^*(t,\nu,q)|=\frac{|h-(h^{(0)}+\epsilon\;h^{(1)})|}{\eta^{tem}}q.
$$
Defining the residual $  R^\epsilon= h -( h^{(0)}+\epsilon\; h^{(1)})$.
It solves the following PDE:
$$
\left\{
\begin{array}{lll}
   \displaystyle (\partial_t+\mathcal L)R^\epsilon+ \frac{1}{\eta^{tem}}R^\epsilon[h+ h^{(0)}+\epsilon h^{(1)}+\eta^{per}]+\epsilon^2
\left\{\mathcal L_0h^{(1)}+\frac{1}{\eta^{tem}}(h^{(1)})^2\right\}=0\\
   R^\epsilon(T-,\nu)=0.
\end{array}
\right.
$$
Directly applying the Feynman-Kac formula to this equation  yields
\begin{equation}\label{hj}
  \displaystyle  R^\epsilon(t,\nu)=\epsilon^2\cdot \mathbb E_t \int_t^TD_2(r; t)\{ \mathcal L_0h^{(1)}+\frac{1}{\eta^{tem}}(h^{(1)})^2\} dr
\end{equation}
where
$$
 D_2(r; t)=
\exp\left(\frac{1}{\eta^{tem}}\int_t^r[h+ h^{(0)}+\epsilon h^{(1)}+\eta^{per}]d\tau\right).
$$
One can see by direct computation that the integrand in Eq. (\ref{hj}) is
bounded in $  [0,T]$, i.e., there exists a constant $  C$, such that
$  |R^\epsilon(t,\nu)|\le C\epsilon^2$ which completes our proof.
\end{proof}
\quad\\

We remark that,  given  an initial value $\nu_0=\nu$,
$$
  \nu_t\big|\nu \sim \mathcal N\Big(m+(\nu-m)e^{-\epsilon  t}, \;\frac{\xi^2}{2 }(1-e^{-2\epsilon   t})\Big),
$$
where $  \mathcal N(\mu,\sigma^2)$ is the  Gaussian distribution with mean $\mu$ and standard variation $\sigma$.

Assume that
$$
  \sigma_t=\phi(\nu_t) \quad {  or} \quad \nu_t=\phi^{-1}(\sigma_t)
$$
is observable in real time with some reasonable degree of confidence.
Indeed, we cannot observe directly the volatility, what we may observe is the Volatility Index (VIX) which may be used as a proxy for $  \sigma_t$.
There are different techniques to estimate the parameters $  (\epsilon, m,\xi)$.
These estimators rely on the persistence of market properties (volatility and liquidity), so that information about the past provides reasonable forecasts for the future. Such persistence, at least across short horizons, is well documented (see, for instance,  Bouchaud et al.   \cite{JP} for more details).

\subsection{Simulation Results}

In this section, we assume $  \phi(\nu)=e^{\nu}$ to test the performance of
three different strategies on small- and medium-capitalization stocks:
(i) constant volatility approach in Eq. (\ref{cons});
(ii) moving constant volatility approach in Eq. (\ref{moving}); and
(iii) first-order correction in Eq. (\ref{po}).
We  refer to these strategies as
``(i) Constant-Vol'', `` (ii) Moving-Constant-Vol" and  ``(ii) Vol-adjust'', respectively.

As far as our  simulation is concerned, we use the following hypothetical values of the model parameters:
 $  \nu=0.5, \ m=1, \ \epsilon=0.01, \ \xi=2, \  \rho=-0.4$\footnote{A negative value for $\rho$ is used here to capture the leverage effect.}.
The rest of the parameters are assumed to be the same as those used in Section 3.2.
The simulation is obtained through the following procedure:
at time t, the trader's trading rate $  \theta_t$ is computed, given the state variables. At time $  t+dt$,
the  mid-price is updated by a random increment $  \pm e^\nu\sqrt{dt}$.
The  volatility is updated accordingly by a random increment:
$$
  \epsilon(m-\nu)dt+\xi\sqrt{\epsilon}(\rho\sqrt{dt}\pm \sqrt{1-\rho^2}\sqrt{dt} )
\;\; \mbox{or} \;\;
\epsilon(m-\nu)dt+\xi\sqrt{\epsilon} (-\rho\sqrt{dt}\pm \sqrt{1-\rho^2}\sqrt{dt} ).
$$
Figure \ref{fig5} illustrates the dynamics of  the volatility for one simulation path.
From the simulation path depicted in Figure \ref{fig5}, we can see the leverage effect between the price and the volatility.
\begin{figure}[H]
\begin{center}
\includegraphics[width=5in, height=4.5in]{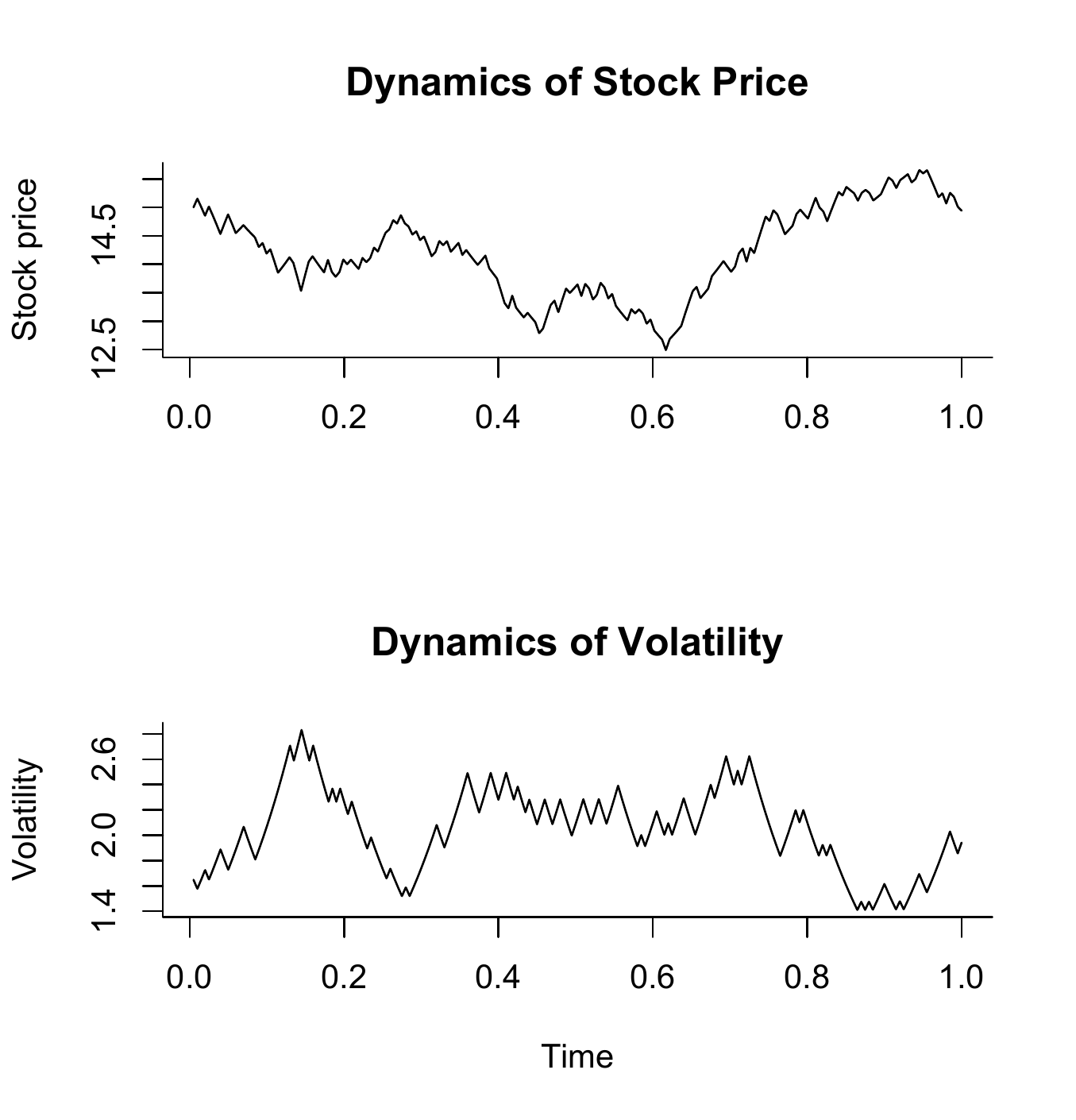}
\caption{ Dynamics of $  (S_t, \exp(\nu_t))$ with $\rho=-0.4$. }
\label{fig5}
\end{center}
\end{figure}

We then run $1,000$ simulations to compare the performance of these  strategies, primarily focusing on the shape of the profit and loss (P\&L) profile.
  The table below shows the final results.
 \begin{table}[H]
\caption{1,000 simulations with $\lambda=0.1$ w.r.t. $  \mathcal R_T$.}
\begin{center}
\begin{tabular}{lccc}
 \hline
\cline{1-4}
Statistics &Constant-Vol  & Moving-Constant-Vol  &Vol-adjust \\
\hline
\cline{1-4}
 Mean &-300.70&-294.50&-288.46\\
 Std&54.76&27.67&27.45\\
 Skewness&1.03&0.23&0.24\\
 Kurtosis&5.26&3.03&3.05\\
\hline
\cline{1-4}
Objective function   &-577.58&-565.39&-560.36\\
\hline
\cline{1-4}
\end{tabular}
\end{center}
\label{t2}
\end{table}%

\begin{figure}[H]
\begin{center}
\includegraphics[width=5in, height=3.8in]{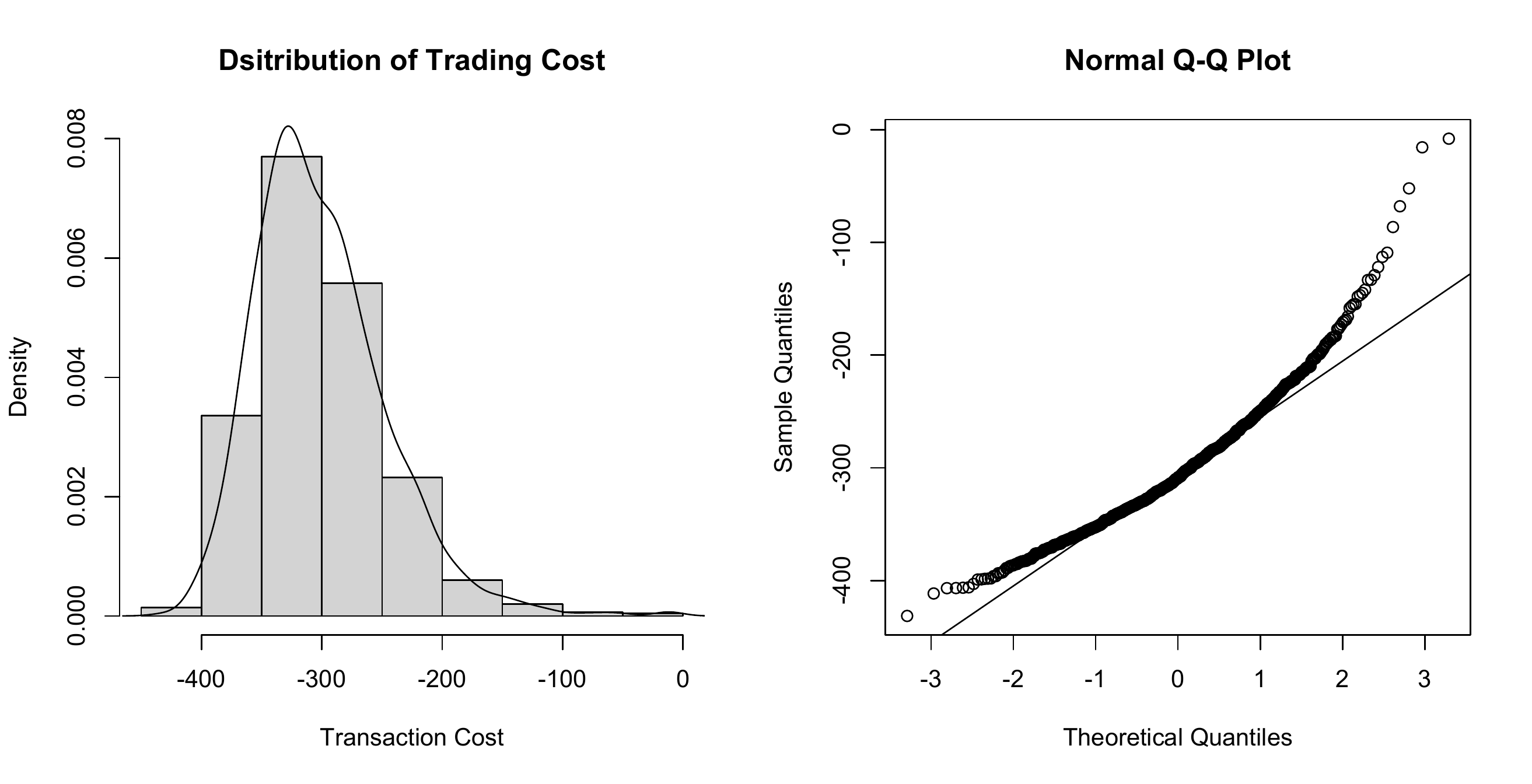}
\caption{  The performance of the ``Constant-Vol" strategy on small- and medium-capitalization assets with their volatilities  varying randomly through the day.    }
\label{fig6}
\end{center}
\end{figure}

As we can see from Table \ref{t2}, the differences among these strategies are significant.  Profiting  from hedging the risk arising from stochastic volatility  is possible. 
Compared with the ``Constant-Vol'' strategy, the results   in the second column,  other strategies adapting to the varying volatility, either ``Moving-Constant-Vol'' (the results in the third column) or ``Vol-adjust'' (the results in the fourth column), obtain a  lower cost and a  smaller risk.  
The more accurate the adaption, the lower the   cost  of trading and  the smaller the  risk of liquidating.

The ``Constant-Vol" strategy is a successful one if the underlying assets are
large-cap US stocks.
However, it is not  applicable for assets that are less heavily traded than large-cap US stocks, where  volatilities are believed to vary randomly through the day.
Figure \ref{fig6} displays  the performance of  the ``Constant-Vol" strategy on small- and medium-capitalization assets.
As we can see from this figure, the distribution of its P \& L  profile is significantly   different from a normal distribution (which is the case for large-cap US stocks)  with quite high levels of skewness and kurtosis
due to the presence of stochastic volatility effect.
Thus at least some adjustment has to be made to identify and capture  the relevant characteristics for these assets.\\

\section{An  Extension: The Incorporation  of Limit Orders }

Consider a  trader, he/she may submit sell limit  orders specifying the quantity and the indicated price per share, which will be executed only when incoming buy market orders are matching with the limit orders.
Otherwise, he/she can post sell market orders for an immediate execution,
but in this case he/she is going to obtain a less favorable execution price.

Suppose that sell limit orders are submitted at the current  best ask, and that the ask-bid spread $\Delta$ keeps unchanged throughout the trading horizon.
Moreover, assume that other participants' buy   market orders arrive at Poisson  rate
$ \lambda^M=A\exp(-\kappa\Delta)$, a decreasing function of $\Delta$ as in the literature.
Let $  M_t$ count  the numbers of buy market orders arriving at the system by time $t$.
The number of shares held by the trader, $ X_t$, can  then be expressed by ($N=1$)
\begin{equation}\label{shares}
X_t=X_0-M_t-\int_0^t\theta_udu,\quad X_0=Q.
\end{equation}
\begin{remark}
Actually, controls belonging to $  \Theta_t$ need to satisfy one more condition:
\begin{equation}\label{condition}
  \theta_t\Big|_{dM_t=1}=0,
\end{equation}
i.e., market order can only be used at times when sell  limit  orders are not executed.
Since for any admissible control process, the integrands in the following two integrals
$$
 \int_0^t\mathbb I_{\{dM_u=0\}}\theta_udu\quad\mbox{and}\quad  \int_0^t\theta_udu
$$
differ at only finitely many times, and when we integrate with respect to the Lebesgue measure $  du$, these differences do not matter. Technically speaking, the set of all jumps of the integrand is a countable set, and hence, it  has Lebesgue measure zero.  Therefore, no mandatory requirement is made to satisfy condition (\ref{condition}) in a  continuous-time  liquidation setting.
\end{remark}

The advantage of market orders over limit orders  is  that they are less affected by averse price impacts. Suppose   a company revises its growth estimates upwards. Those investors who observe piece of this news may react by submitting buy market orders,   executed against sell  limit  orders in the  limit order book that have not yet been withdrawn. All these executed sell  limit orders may incur opportunity costs due to the adverse selection risk.
Market orders, however, will involve unfavorable  transaction fees--market impact. \\

\subsection{Adverse Selection}

It is possible to incorporate the effect of adverse selection in  the trading strategy by assuming that when other participants'   market orders arrive, 
they may induce a jump in the stock price in the direction of the trade.

Assume that  the bid-price\footnote{Since sell market orders are executed against buy   limit orders, $S_t^I$ is actually the best bid at time $t$.} 
satisfies the stochastic differential equation
\begin{equation}\label{Mid}
 d S^I_t=\displaystyle   dS_t+\eta^{u}dM_t-\eta^dd\hat M_t -\eta^{per}\theta_td t,
\end{equation}
where  $  \{\hat M_t\}_{t\ge0}$ is a jump  process satisfying 
\begin{equation}\label{po}
\left\{
\begin{array}{lll}
 \mathbb P( d\hat M_t=1|dM_t=0)=\frac{\lambda^M\eta^u}{\eta^d} dt\\
 \mathbb P( d\hat M_t=1|dM_t=1)=0,
\end{array}
\right.
\end{equation}
and $ \eta^u, \eta^d>0$ represent the jump in bid-price just after a market order event.   

Let us give a brief comment on Eqs. (\ref{Mid}) and (\ref{po}).  
If bid-side liquidity is found for the order at time $  t\in[0,T)$, i.e., $dM_t=1$, the expected execution  price for  the order at time $t$ moves up (by $ \eta^u$). 
If no liquidity is found on that side, i.e., $ dM_t=0$, 
the expected price move is supposed to be  in the opposite direction. 
By Eq. (\ref{po}), 
\begin{equation}\label{po1}
\begin{array}{lll}
 \displaystyle \mathbb P(d\hat M_t=1)&=& \displaystyle \mathbb P(d\hat M_t=1|d M_t=0)\times \mathbb P(dM_t=0)+\mathbb P(d\hat M_t=1|d M_t=1)\times \mathbb P(dM_t=1)\\
&=&  \frac{\lambda^M\eta^u}{\eta^d} dt \times \mathbb P(dM_t=0)\\
&=& \frac{\lambda^M\eta^u}{\eta^d} dt+o(dt), 
\end{array}
\end{equation}
and 
$$ 
\mathbb P(d\hat M_t=0)= 1-\frac{\lambda^M\eta^u}{\eta^d} dt+o(dt).
$$ 
Therefore, 
$$
\begin{array}{lll}
  \mathbb E_t[\Delta S_t^I]&=&  \mathbb E_t[ S_t^I-S_{t-}^I]\\
&=&  [\lambda^M dt+o(dt)]\times \eta^u -[1-\lambda^Mdt+o(dt)]\times [\frac{\lambda^M\eta^u}{\eta^d} dt+o(dt)]\times \eta^d \\
&=&  o(dt).
\end{array}
$$

To  evaluate the effect of the adverse selection on trader's trading strategy, 
we consider the simplest case: risk-neutral investors.  
Similar conclusions may also hold for risk-averse investors.
  \begin{itemize}
\item For a market-order only trader who does not have any limit orders in the market and simply uses market orders to implement his/her liquidation mandate, we have $ \eta^u=\eta^d=0$. That is, the trader does not  use limit orders to  ``monitor" other participants' trading behavior, and believes that that stock price evolves over time   according  to
 $$
  dS^I_t=dS_t-\eta^{per}\theta_tdt.
 $$
 It is straightforward to verify that the solution to the associated  HJB equation is
 $$
   J(t, q)=\left(\frac{1}{\frac{2}{\eta^{per}-2K}-\frac{1}{\eta^{tem}}(T-t)}-\frac{\eta^{per}}{2}\right)q^2.
 $$
 The optimal liquidating strategy can then be derived through the relations,
 \begin{equation}\label{ac}
 \begin{array}{ll}
    \displaystyle \theta^M_t=-\frac{1}{2\eta^{tem}}\left[\partial_q J+\eta^{per}q\right],\\
    \displaystyle X_T=\frac{\eta^{tem}Q}{\eta^{tem}+T\left(2K-\eta^{per}\right)}.\\
 \end{array}
 \end{equation}
It is straightforward to verify that if $K$ satisfies Condition (\ref{cond1}), 
then $\theta^M_t\ge0, X_T\in(0,Q)$, and as $  K\to\infty$, $  X_T\to0$. 
\end{itemize}
It is the simplest liquidating strategy in limit order books for impatient traders
(recall the urgency for liquidation is the primary consideration).

Now let us take limit orders into consideration.
In addition to  earning ask-bid spreads, limit orders can also be used to incorporate the effect of adverse selection on the trading strategy.
  \begin{itemize}
\item For a risk-neutral trader who seeks  to maximize expected terminal wealth with limit and market orders. 
Under some smoothness assumptions for a classical solution or the situation when a unique viscosity solution is considered, the  value function $J$ 
without any constraints   solves
   \begin{equation}\label{case2}
   \left\{
\begin{array}{lll}
  \displaystyle  \partial_t J  
  +\lambda^M \left[\eta^u+\Delta+J(t,q-1)-J(t,q)\right]

+   \frac{1}{4\eta^{tem}}\left[\partial_q J+\eta^{per}q\right]^2=0\\
 \displaystyle J(T-, q)=-K q^2.
\end{array}
\right.
\end{equation}
 We conjecture that the solution has the following form:
$$
 J(t,q)= f(t)+g(t)q+h(t)q^2.
$$
Some simple calculations lead to:
\begin{equation}\label{relation}
\left\{
\begin{array}{lll}
 \displaystyle  h(t)=\frac{1}{\frac{2}{\eta^{per}-2K}-\frac{1}{\eta^{tem}}(T-t)}-\frac{\eta^{per}}{2}\\
 \displaystyle g(t)=-2\lambda^M\int_t^Te^{\frac{1}{2\eta^{tem}}\int_t^r(2h(\tau)+\eta^{per})d\tau}h(r)dr\\
 \displaystyle f(t)=\int_t^T\Big[\lambda^M\big(\eta^u+\Delta+h(r)-g(r)\big) +\frac{g^2(r)}{4\eta^{tem}} \Big]dr,
\end{array}
\right.
\end{equation}
and
\begin{equation}\label{appo}
   \displaystyle  \theta^{M\&L}_t=\displaystyle  -\frac{1}{2\eta^{tem}}\left[g(t)+\left(2h(t)+\eta^{per}\right)q\right].
\end{equation}
 Here $  \theta^{M\&L}_t$ denotes the optimal liquidating strategy with limit and market orders.
\end{itemize}
\quad\\

We remark that,  
\begin{description}
\item[(1)] For any    $  t\in[0,T)$,    $  h(t)<0$, $  g(t)>0$. 
Therefore, 
$$
 \theta^{M\&L}_t <\theta_t^M.
$$
That is, realizing the information carried by other marker participants 
(as we assumed in our paper, it is monitored by the execution of limit orders), 
traders will  slow down their liquidating speed so as to get profit from market momentum.

\item[(2)] If   condition (\ref{cond1}) is satisfied, 
$$
\begin{array}{lll}
  \displaystyle g(t)&=&  \displaystyle -2\lambda^M\int_t^Te^{\frac{1}{2\eta^{tem}}\int_t^r(2h(\tau)+\eta^{per})d\tau}h(r)dr\\
&=&  \displaystyle -2\eta^{tem} \lambda^M\bigg\{  \int_t^Te^{\frac{1}{2\eta^{tem}}\int_t^r(2h(\tau)+\eta^{per})d\tau}d\left(\frac{1}{2\eta^{tem}}\int_t^r(2h(\tau)+\eta^{per})d\tau\right)\\
&&  \displaystyle \quad\quad\quad \quad \quad-\frac{\eta^{per}}{2\eta^{tem}}  \int_t^Te^{\frac{1}{2\eta^{tem}}\int_t^r(2h(\tau)+\eta^{per})d\tau}dr\bigg\} \\
&=& \displaystyle -2\eta^{tem}\lambda^M\left\{ e^{\frac{1}{2\eta^{tem}}\int_t^T(2h(\tau)+\eta^{per})d\tau} -1-\frac{\eta^{per}}{2\eta^{tem}}  \int_t^Te^{\frac{1}{2\eta^{tem}}\int_t^r(2h(\tau)+\eta^{per})d\tau}dr\right\}.
\end{array}
$$
Combining the results in Eq. (\ref{relation}), we have 
$$
   e^{\frac{1}{2\eta^{tem}}\int_t^r(2h(\tau)+\eta^{per})d\tau}=\frac{\frac{1}{2K-\eta^{per}}+\frac{1}{2\eta^{tem}}(T-r)}{\frac{1}{2K-\eta^{per}}+\frac{1}{2\eta^{tem}}(T-t)}.
 $$
If  the time horizon $T$ satisfies 
 $$
   T\le \frac{2\eta^{tem}}{2K-\eta^{per}}\left[\sqrt{1+\frac{2(2K-\eta^{per})}{\eta^{per}}}-1\right],
 $$
then $ g(t)\le 2\eta^{tem}\lambda^M$    holds for any time $  t\in[0,T)$. 
 That is, the reduced liquidation speed  $\frac{g(t)}{2\eta^{tem}}$ is always smaller 
than the rate at which limit sell orders are  executed.   
The overall expected liquidation speed  has increased. 
\item[(3)]  
$$
\left\{
\begin{array}{lll}
 \displaystyle\mathbb E[\dot{X}_t]= -\lambda^M-\mathbb E[\theta^{M\&L}_t]
=  -\lambda^M+\frac{1}{2\eta^{tem}}\left[g(t)+\left(2h(t)+\eta^{per}\right)\mathbb E[X_t] \right]\\
\displaystyle X_0=Q,
\end{array}
\right.
 $$
which yields 
 $$
\mathbb E[X_t]
=Qe^{\frac{1}{2\eta^{tem}}\int_0^t(2h(\tau)+\eta^{per})d\tau}-\int_0^t
\left[\lambda^M-\frac{1}{2\eta^{tem}}g(u)\right]e^{\frac{1}{2\eta^{tem}}\int_u^t(2h(\tau)+\eta^{per})d\tau}du.
 $$
The liquidation target  
 $$
  Q\ge \max_{t\in[0,T)} e^{-\frac{1}{2\eta^{tem}}\int_0^t(2h(\tau)+\eta^{per})d\tau}
\left[\int_0^t\left[\lambda^M-\frac{1}{2\eta^{tem}}g(u)\right]e^{\frac{1}{2\eta^{tem}}\int_u^t(2h(\tau)+\eta^{per})d\tau}du-\frac{g(t)}{2h(t)+\eta^{per}}\right],
  $$
and we have $ \mathbb E[\theta_t^{M\&L}]\ge0$ and $  X_t\in[0,Q]$ for any time  $  t\in[0,T)$. 
The obtained optimal strategy in Eq. (\ref{appo}) is  actually  the one we are looking for. 
\end{description}
 \quad\\
 
The estimation procedure for the parameter $  \lambda_M$ in Eq. (\ref{relation}) basically matches the intuition that one must count the number of executions at ask/bid  and normalized this quantity by the length of the time horizon. For high-frequency data over $  [0,T]$, we have a consistent estimator of $  \lambda_M$, which is given by
$$
  \widehat{\lambda}_M=\frac{\mbox{\#\{ executions at ask/bid\}}}{T}.
$$

\subsection{Numerical Results }

Consider the situation in which stocks are traded in a single exchange, i.e., $  N=1$.
As far as our simulation is concerned, we adopt the hypothetical values of the model parameters $  \lambda_M=100$, $\Delta=0.3$ and $  \eta^u=\eta^d=\eta^I=0.02$.
Other parameters not listed here are assumed to be the same as those used in 
Section 3.2.  
The simulation is obtained through the following procedure:
 \begin{table}[H]
\caption{ The procedure }
\begin{center}
\begin{tabular}{ll}
\hline
\cline{1-2}
Step 0.&Set  initial values at time $  t=0$;\\
Step 1.& Compute the trader's liquidating rate $  \theta_t$, given the state variables; \\
Step 2. & If a buy market order arrives at time $t$, then update the quantities by:\\
&\quad\quad$$
$
\left\{
\begin{array}{l}
  \displaystyle X_{t+dt}=X_t-1\\
  \displaystyle C_{t+dt}=C_t+(S^I_t+\Delta)\\
  \displaystyle S^I_{t+}=S^I_t+\eta^I
\end{array}
\right.
$
$$\quad execute  a trade using  sell LOs;\\
& otherwise, \\
& \quad\quad$$
$
\left\{
\begin{array}{l}
  \displaystyle X_{t+dt}=X_t-\theta_tdt\\
 \displaystyle C_{t+dt}=C_t+( S^I_t-\eta^{tem}\theta_t)\cdot\theta_tdt
\end{array}
\right.
$
$$\quad execute a trade   using   sell  MOs,\\
&and \\
&\quad $  S^I_{t+}=\left\{
\begin{array}{lll}
 \displaystyle S^I_t-\eta^I-\eta^{per}\theta_tdt, &&\mbox{with probability  $ \lambda^M dt$,}\\
  \displaystyle S^I_t-\eta^{per}\theta_tdt, &&\mbox{with probability  $  1-\lambda^M dt$.}\\
\end{array}
\right.
$\\
Step 3.& Update the affected price by a random increment $  \pm \sigma\sqrt{dt}$:\\
&\quad\quad\quad\;$  S^I_{t+dt}=\left\{
\begin{array}{lll}
  S^I_{t+}+\sigma \sqrt{dt},&&\mbox{with probability  0.5,}\\
  S^I_{t+}-\sigma \sqrt{dt},&&\mbox{with probability  0.5;}\\
\end{array}
\right.
$\\
Step 4.& Let $  t=t+dt$. If $  t< T$, return to Step 1; otherwise, stop and exit.\\
\hline
\cline{1-2}
\end{tabular}
\end{center}
\label{default}
\end{table}%
\noindent

 Figures \ref{fig7}  illustrates  the dynamics of inventory and affected price  for one simulation of a stock path.

\begin{figure}[H]
\begin{center}
\includegraphics[width=5in, height=6.5in]{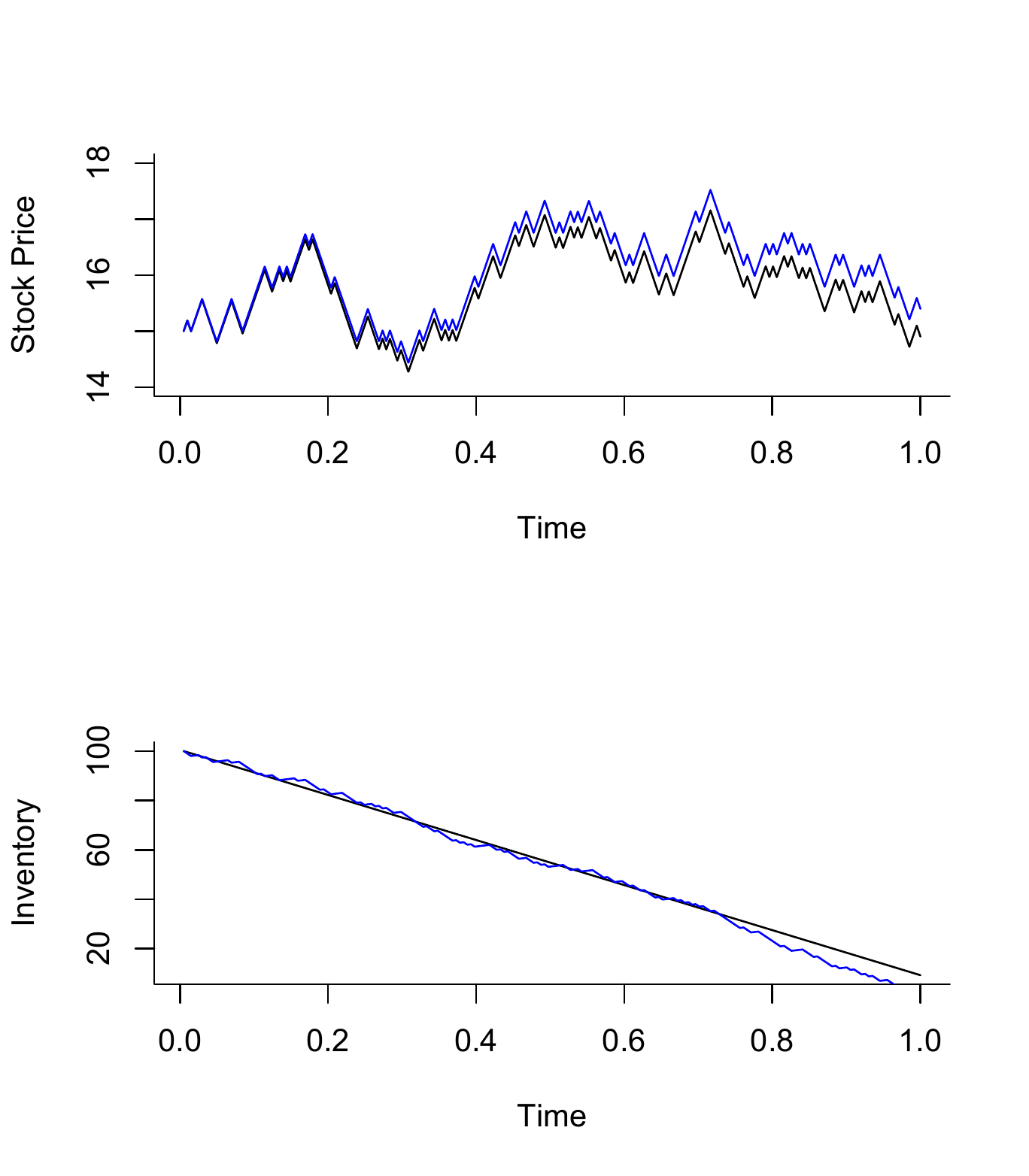}
\caption{ The dynamics of  inventory and  affected price: {(\bf Black}) Market orders only; {(\bf Blue}) Market \& limit.   }
\label{fig7}
\end{center}
\end{figure}

We observe that investors  without making use of  limit orders would   liquidate on a linear trajectory
$  \theta=\frac{q}{a+b(T-t)}$ and receive a relatively  lower execution  price from a certain point in time during the trading horizon.\\

We then run 1000 simulations to investigate the effect of adverse selection on the trading strategy.
The performances of  strategies  aware of this effect and strategies not aware of this effect are depicted in Figure \ref{fig9}.

\begin{figure}[H]
\begin{center}
\includegraphics[width=5 in, height=3.4 in]{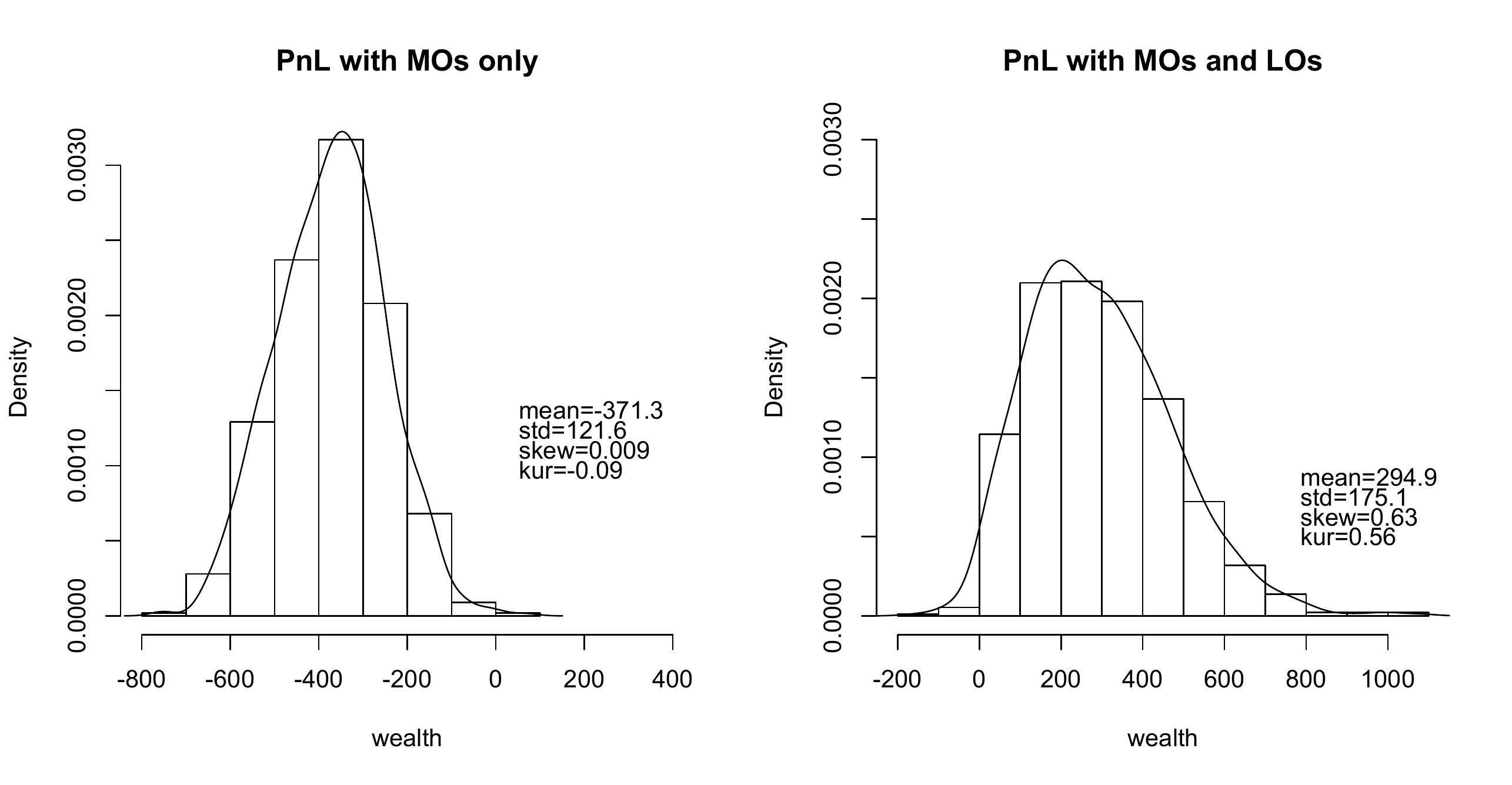}
\caption{ Profit \& Loss profile distributions.  }
\label{fig9}
\end{center}
\end{figure}

The numerical simulations show that the strategy including the use of limit orders performs better than the market only strategy,
it receives a significantly higher expected Profit \& Loss profile
(from risk-neutral traders' point of view),
say $294.9$ with both MOs and LOs and $-371.3$ with MOs only.

\section{Conclusions}

In this paper,  we utilize a quantitative model  to discuss the  optimal liquidating  problem in an illiquid market.
Three different aspects of the optimal liquidation problem are discussed in our paper:
(i) optimal liquidating strategy  across multiple venues;
(ii) optimal liquidating strategy  with stochastic volatility (a special case, slow mean-reverting stochastic volatility, was discussed); and
(iii) the incorporation of limit orders in optimal liquidation  problem.

We formulate the optimal liquidation problem   with both temporary and permanent  market impacts. Under the no arbitrage assumption, we propose  a  linear model  to determine  the equilibrium price in a competitive market with  multiple trading  venues.
Multi-scale analysis method is used to discuss the case of  stochastic volatility, where the stochastic  volatility is driven by a slow-varying factor $  \nu_t$.
As an extension of our model, we include the use of limit orders.
These orders are submitted at the current best ask/bid with buy limit order not being executed.
In addition to earn  ask-bid spreads,  these orders can be used to incorporate the effect of adverse selection on the trading strategy.

In our  model, the quantity of limit sell orders is assumed to be {\it one} share per trade.
Whereas, in reality, it can be multiple shares per trade.
This will be an interesting direction  for future research.

\section*{Appendix}

\subsection*{Appendix A.}

\begin{proof}
In this appendix, we give the steps for deriving HJB-1.  
Using Eq. (\ref{R}),  the objective at any time $t$  becomes
$$
\begin{array}{lll}
\displaystyle
 \mathbb E_t\bigg[\int_t^{T-}\sigma_uX_udW_u-\eta^{per}
\int_t^{T-}X_u(\beta^{(1)}_u\theta^{(1)}_u+\cdots+\beta^{(N)}_u\theta^{(N)} _u)du\\
\displaystyle\quad\quad -\int_t^{T-}[\eta^{tem,(1)}(\theta^{(1)}_u)^2+\cdots
+\eta^{tem,(N)}(\theta^{(N)} _u)^2]du
-\lambda\int_t^{T-}\sigma_u^2X_u^2du -X_{T-}\mathcal C^o(X_{T-})\bigg].
\end{array}
$$
 Thus, given initial value $  X_t=q$,
 $$
 \begin{array}{lll}
J(t,s, \sigma, q)&=&\displaystyle \max_{\boldsymbol\theta(\cdot)\in \Theta_t}\; \mathbb E^{t,s,\sigma, q}_{\boldsymbol\theta(\cdot)}\bigg[\int_t^{T-}\sigma_uX_udW_u-\eta^{per}\int_t^{T-}X_u(\beta^{(1)}_u\theta^{(1)}_u+\cdots+\beta^{(N)}_u\theta^{(N)} _u)du\\
 \displaystyle &&-\int_t^{T-}\left([\eta^{tem,(1)}(\theta^{(1)}_u)^2+\cdots+\eta^{tem,(N)}(\theta^{(N)} _u)^2]- \lambda \sigma_u^2X_u^2\right)du-X_{T-}\mathcal C^o(X_{T-})\bigg]\\
 &=&\displaystyle \max_{\boldsymbol\theta(\cdot)\in\Theta_t}\; \mathbb E^{t,s,\sigma, q}_{\boldsymbol\theta(\cdot)}\Bigg[\int_t^{t+\Delta t}\sigma_uX_udW_u-\eta^{per}\int_t^{t+\Delta t}X_u(\beta^{(1)}_u\theta^{(1)}_u+\cdots+\beta^{(N)}_u\theta^{(N)} _u)du\\
&&  \displaystyle -\int_t^{t+\Delta t}\left([\eta^{tem,(1)}(\theta^{(1)}_u)^2+\cdots+\eta^{tem,(N)}(\theta^{(N)} _u)^2]-\lambda \sigma_u^2X_u^2\right)du\\
&&  \displaystyle +\mathbb E^{t+\Delta t,s+\Delta s, \sigma+\Delta\sigma, q+\Delta q}_{\boldsymbol\theta(\cdot)} \bigg[\int_{t+\Delta t}^{T-}\sigma_uX_udW_u-\eta^{per}\int_{t+\Delta t}^{T-}X_u(\beta^1_u\theta^{(1)}_u+\cdots+\beta^N_u\theta^{(N)} _u)du\\
&&   \displaystyle -\int_{t+\Delta t}^{T-}\left([\eta^{tem,(1)}(\theta^{(1)}_u)^2+\cdots+\eta^{tem,(N)}(\theta^{(N)} _u)^2]-\lambda \sigma_u^2X_u^2\right)du
 -X_{T-}\mathcal C^o(X_{T-})\bigg]\Bigg],\\
\end{array}
$$
where $\mathbb E^{t,s,\sigma, q}_{\boldsymbol\theta(\cdot)}[\cdot]$  is the conditional   expectation conditioned on  the  control process $ \boldsymbol\theta(\cdot)$ and the initial state   $  (S_t, \sigma_t, X_t)=(s,\sigma,q)$.

Notice that for any control process $  \boldsymbol\theta(\cdot)\in\Theta_t$, we have
$$
\begin{array}{lll}
&&  \displaystyle \mathbb E^{t+\Delta t, s+\Delta s, \sigma+\Delta \sigma, q+\Delta q}_{\boldsymbol\theta(\cdot)} \bigg[\int_{t+\Delta t}^{T-}\sigma_uX_udW_u-\eta^{per}\int_{t+\Delta t}^{T-}X_u(\beta^{(1)}_u\theta^{(1)}_u+\cdots+\beta^{(N)}_u\theta^{(N)} _u)du\\
&& \displaystyle -\int_{t+\Delta t}^{T-}\left([\eta^{tem,(1)}(\theta^{(1)}_u)^2+\cdots+\eta^{tem,(N)}(\theta^{(N)} _u)^2]-\lambda \sigma_u^2X_u^2\right)du-X_{T-}\mathcal C^o(X_{T-})\bigg]\\
&=& \displaystyle U(t+\Delta t, s+\Delta s, \sigma+\Delta \sigma, q+\Delta q, \boldsymbol\theta(\cdot))\\
&\le & \displaystyle \max_{\boldsymbol\theta(\cdot)\in\Theta_{t+\Delta t}} U(t+\Delta t, s+\Delta s, \sigma+\Delta \sigma,  q+\Delta q, \boldsymbol\theta(\cdot))\\
&=& \displaystyle J(t+\Delta t, s+\Delta s, \sigma+\Delta \sigma, q+\Delta q).
\end{array}
$$
Thus, we have
\begin{equation}\label{le}
\begin{array}{lll}
  J(t,s,\sigma, q) &\le&  \displaystyle  \max_{\boldsymbol\theta(\cdot)\in\Theta_t}\; \mathbb E^{t,s,\sigma, q}_{\boldsymbol\theta(\cdot)}\Big[J(t+\Delta t,s+\Delta s, \sigma+\Delta\sigma,  q+\Delta q)+\int_t^{t+\Delta t}\sigma_uX_udW_u\\
&& \displaystyle -\eta^{per}\int_t^{t+\Delta t}X_u(\beta^{(1)}_u\theta^{(1)}_u
+\cdots+\beta^{(N)}_u\theta^{(N)} _u)du\\
&&  \displaystyle-\int_t^{t+\Delta t}\left([\eta^{tem,(1)}(\theta^{(1)}_u)^2+\cdots+\eta^{tem,(N)}(\theta^{(N)} _u)^2]-\lambda \sigma_u^2X_u^2\right)du\Big].\\
\end{array}
\end{equation}
Let $ \tilde{\boldsymbol\theta}^*(\cdot)$ be the optimal control over $  [t+\Delta t, T)$, modifying the optimal control between t and $  t+\Delta t$ by an arbitrary control $  \boldsymbol\theta(\cdot)$, i.e. $
  \boldsymbol\theta^\prime_u=\boldsymbol\theta\mathbb I_{\{u\le  t+\Delta t\}}+\tilde{\boldsymbol\theta}^*_{u}\mathbb  I_{\{u> t+\Delta t\}}$.
Then, we obtain
$$
\begin{array}{lll}
  J(t,s,\sigma, q) &\ge&  J_{ \boldsymbol\theta^\prime(\cdot)}(t,s,\sigma, q)\\
&=&  \displaystyle  \mathbb E^{t,s, \sigma, q}_{\boldsymbol\theta(\cdot)}\Big[J(t+\Delta t, s+\Delta s, \sigma+\Delta \sigma,  q+\Delta q)+\int_t^{t+\Delta t}\sigma_uX_udW_u\\
&&  \displaystyle -\eta^{per}\int_t^{t+\Delta t}X_u(\beta^{(1)}_u\theta^{(1)}_u+\cdots+\beta^{(N)}_u\theta^{(N)} _u)du\\
 &&  \displaystyle  -\int_t^{t+\Delta t}\left([\eta^{tem,(1)}(\theta^{(1)}_u)^2+\cdots+\eta^{tem,(N)}(\theta^{(N)} _u)^2]-\lambda \sigma_u^2X_u^2\right)du\Big].\\
\end{array}
$$
Hence, we obtain
\begin{equation}\label{ge}
\begin{array}{lll}
  J(t,s,\sigma, q) &\ge&
   \displaystyle \max_{\boldsymbol\theta(\cdot)\in\Theta_t}\;
 \mathbb E^{t,s,\sigma, q}_{\boldsymbol\theta(\cdot)}\Big[J(t+\Delta t,s+\Delta s, \sigma+\Delta\sigma,  q+\Delta q)+\int_t^{t+\Delta t}\sigma_uX_udW_u\\
 &&  \displaystyle -\eta^{per}\int_t^{t+\Delta t}X_u(\beta^{(1)}_u\theta^{(1)}_u+\cdots+\beta^{(N)}_u\theta^{(N)} _u)du\\
&&  \displaystyle -\int_t^{t+\Delta t}\left([\eta^{tem,(1)}(\theta^{(1)}_u)^2+\cdots+\eta^{tem,(N)}(\theta^{(N)} _u)^2]-\lambda \sigma_u^2X_u^2\right)du\Big].\\
\end{array}
\end{equation}
 Putting both inequalities, Eq.(\ref{le}) and Eq.(\ref{ge}),  together, we arrive at the Dynamic Programming Principle (DPP):
 $$
 \begin{array}{lll}
  J(t,s,\sigma, q)&=&
   \displaystyle \max_{\boldsymbol\theta(\cdot)\in\Theta_t}\;
 \mathbb E^{t,s,\sigma, q}_{\boldsymbol\theta(\cdot)}\Big[J(t+\Delta t, s+\Delta s, \sigma+\Delta\sigma,  q+\Delta q)+\int_t^{t+\Delta t}\sigma_uX_udW_u\\
 &&  \displaystyle  -\eta^{per}\int_t^{t+\Delta t}X_u(\beta^{(1)}_u\theta^{(1)}_u+\cdots+\beta^{(N)}_u\theta^{(N)} _u)du\\
 &&  \displaystyle -\int_t^{t+\Delta t}\left([\eta^{tem,(1)}(\theta^{(1)}_u)^2+\cdots+\eta^{tem,(N)}(\theta^{(N)} _u)^2]-\lambda \sigma_u^2X_u^2\right)du\Big].
\end{array}
$$
Let $  \boldsymbol\theta^*(\cdot)$ be the optimal control over $  [t, T)$, then we have
$$
  \boldsymbol\theta^*_{t,s,\sigma, q}(t^\prime,s^\prime, \sigma^\prime,  q^\prime)=\tilde{\boldsymbol\theta}_{t+\Delta t,s+\Delta s, \sigma+\Delta \sigma,  q+\Delta q}^*(t^\prime, s^\prime, \sigma^\prime,  q^\prime), \quad\mbox{for any  \; $  t^\prime>t+\Delta t$}.
$$
The dynamic programming equation, HJB equation, is the  infinitesimal version of this principle.
\end{proof}

\subsection*{Appendix B.}
\begin{proof}
In the case of (1) $N=1$; or (2)  the trading venues have the same market efficiency, 
$\Delta_N=\lambda \sigma^2>0$. 
Hence, 
$$
  h(t;\sigma)=\sqrt{\frac{\Delta_N}{a}}\cdot \frac{\varsigma e^{-2\sqrt{a\Delta_N}(T-t)}-1}{\varsigma e^{-2\sqrt{a\Delta_N}(T-t)}+1}-\frac{\eta^{per}}{2N}.
$$
 Under the assumption  that 
$ K>\displaystyle  \frac{b}{2a}+\sqrt{\frac{|\Delta_N|}{a}}$, $ -1<\varsigma<0$. 
Therefore, $h$ is a  decreasing function in $t$, and for any $  n=1,\cdots, N$
 \begin{equation}\label{proof1}
 \begin{array}{lll}
   2h(t;\sigma)+\eta^{per}\beta^{(n)}&=&  \displaystyle \sqrt{\frac{\Delta_N}{a}}\cdot \frac{\varsigma e^{-2\sqrt{a\Delta_N}(T-t)}-1}{\varsigma e^{-2\sqrt{a\Delta_N}(T-t)}+1} \le0.
 \end{array}
 \end{equation}
 Recall that 
 $$
 \left\{
 \begin{array}{l}
\displaystyle \dot{X}_t=-\sum_{n=1}^N \theta^{n,*}_t\\
\displaystyle  \theta^{n,*}_t
=-\frac{1}{2\eta^{tem}} \left[2h(t;\sigma)+ \eta^{per}\beta^{(n)}\right]X_t.
 \end{array}
 \right.
 $$
 Hence,  $X_t$ satisfies the following first-order ODE:
 $$
 \left\{
 \begin{array}{l}
 \displaystyle \dot{X}_t=\left( \sum_{n=1}^N\frac{1}{2\eta^{tem}} \left[2h(t;\sigma)
+ \eta^{per}\beta^{(n)}\right]\right)X_t\\
  \displaystyle   X_0=Q,
 \end{array}
 \right.
 $$
 which yields
\begin{equation}\label{proof2}
   X_t=Q\cdot \exp\left(\int_0^t\sum_{n=1}^N\frac{1}{2\eta^{tem}} \left[2h(u;\sigma)+ \eta^{per}\beta^{(n)}\right]du \right) \in (0, Q],\quad \mbox{for} \  t \in [0,T).
 \end{equation}
 Combining the results in Eq. (\ref{proof1}) and Eq. (\ref{proof2}),  we conclude that,  for any $n=1,\ldots, N$,
$$
\theta^{n,*}_t=-\frac{1}{2\eta^{tem}} \left[2h(t;\sigma)+ \eta^{per}\beta^{(n)}\right]X_t\ge 0,
$$
and  
 $$
 \int_0^T\sum_{n=1}^N\theta^{n,*}_tdt=X_0-X_T
=Q\left[ 1-\exp\left(\int_0^t\sum_{n=1}^N\frac{1}{2\eta^{tem}} \left[2h(u;\sigma)+ \eta^{per}\beta^{(n)}\right]du\right) \right]\le Q.
 $$
 
\end{proof}

\section*{Acknowledgements}

This research work was supported by Research Grants Council of 
Hong Kong under Grant Number 17301214, 
HKU Strategic Theme on Computation and Information
and National Natural Science Foundation of China Under Grant number 11671158.

\end{document}
